\documentclass[11pt, oneside]{article}   	

\usepackage{fullpage,times}
\usepackage[english]{babel}
\usepackage[toc,page]{appendix}
\usepackage[margin=1in]{geometry}
\usepackage{hyperref}
\geometry{letterpaper}                   		
\usepackage{graphicx,wrapfig,lipsum,xcolor}	
\usepackage[labelformat=empty]{caption}
\usepackage{yfonts}	
\usepackage{amssymb}
\usepackage{amsmath}
\usepackage{amsthm}
\usepackage{scalerel}
\usepackage{verbatim}
\usepackage[lined,boxed,ruled,norelsize,algo2e,linesnumbered]{algorithm2e}

\newcommand{\E}{\mathbb{E}}


\usepackage[colorinlistoftodos]{todonotes}

\newtheorem{thm}{Theorem}[section]
\newtheorem{lemma}[thm]{Lemma}
\newtheorem{prop}[thm]{Proposition}
\newtheorem{cor}[thm]{Corollary}

\theoremstyle{definition}
\newtheorem{defn}{Definition}

\newcommand{\NN}{{\mathbb{N}}}
\newcommand{\RR}{{\mathbb{R}}}

\newcommand{\argmin}{{\mathrm{argmin}}}
\newcommand{\match}{{\mathrm{match}}}
\newcommand{\serve}{{\mathrm{serve}}}

\DeclareMathOperator{\poly}{poly}

\title{Online Multiserver Convex Chasing and Optimization}

\author{S\'ebastien Bubeck\thanks{Microsoft Research, {\tt sebubeck@microsoft.com}} \and
            Yuval Rabani\thanks{The Rachel and Selim Benin School of Computer Science and Engineering, The Hebrew University of Jerusalem, Jerusalem 9190416, Israel, {\tt yrabani@cs.huji.ac.il}. Research supported in part by ISF grant 956-15 and by NSFC-ISF grant 2553-17. Part of this work was done while visiting Microsoft Research.}\and 
            Mark Sellke\thanks{Stanford University Department of Mathematics, {\tt msellke@stanford.edu}. Part of this work was done while visiting Microsoft Research. Partially supported by NSF and Stanford graduate fellowships.}
           }

\date{\today}

\begin{document}

\maketitle

\begin{abstract}
We introduce the problem of $k$-chasing of convex functions, 
a simultaneous generalization of both the famous $k$-server problem in $\RR^d$, and of the problem of chasing convex bodies and functions.
Aside from fundamental
interest in this general form, it has natural applications to online $k$-clustering
problems with objectives such as $k$-median or $k$-means. We show that this
problem exhibits a rich landscape of behavior. In general, if both $k > 1$ and 
$d > 1$ there does not exist any online algorithm with bounded competitiveness.
By contrast, we exhibit a class of nicely behaved functions (which include in particular
the above-mentioned clustering problems), for which we show that
competitive online algorithms exist, and moreover with dimension-free competitive ratio.

We also introduce a parallel question of top-$k$ action regret minimization in the realm
of online convex optimization. 
There, too, a much rougher landscape emerges for $k>1$.
While it is possible to achieve vanishing regret, unlike the top-one action case the
rate of vanishing does not speed up for strongly convex functions. Moreover,
vanishing regret necessitates both intractable computations and randomness.
Finally we leave open whether almost dimension-free regret is achievable for $k>1$ and
general convex losses. As evidence that it might be possible, we prove dimension-free regret for linear 
losses via an information-theoretic argument.
\end{abstract}

\thispagestyle{empty}
\newpage
\setcounter{page}{1}


\section{Introduction}

\paragraph{Motivation and problem statement.}
The $k$-server problem~\cite{MMS90} is a fundamental question in online computing. In this
problem, $k$ servers occupy $k$ points in a metric space, and move around to serve on-the-fly
a sequence of incoming requests. Each request is a point that must be served by moving at
least one server to its location, incurring a cost equal to the distance traveled. In the competitive
analysis paradigm of online computing, the goal of an online algorithm is to serve any request
sequence at a cost proportional to the best possible cost for that sequence (allowing in general
also a fixed additive term). The famous $k$-server conjecture states that the factor of proportionality
is precisely $k$ in every metric space of size at least $k+1$. Work on this conjecture and on the
counterpart randomized $k$-server conjecture (see~\cite{KRR91}) has for a long time defined 
and dominated the area of online computing, and also produced many of its deepest revelations.
The allure of the $k$-server model also stems from the fact that it is a fairly general model of
online computing that captures as special cases a wide range of applications. The $k$-server 
conjecture itself is nearly settled~\cite{MMS90,KP95}, and significant progress has recently been 
reported on its randomized counterpart~\cite{BLMN03,BBMN11,BCLLM18,Lee18}.

In this paper we discuss the problem of $k$-chasing of convex functions, which is a generalization
of the $k$-server problem in $\RR^d$ that is natural, revealing, and has interesting applications 
to problems of decision making under uncertainty and unsupervised learning. The only difference with
the classical $k$-server problem is that in this problem the requests are convex penalty functions.
An algorithm pays the sum of the movement cost and the penalty cost. (Penalties can, of course, 
be levied on the algorithm either before or after the move, leading to two distinct flavors of the 
problem; we discuss this matter further below.) Chasing convex functions online (by a single server) 
is also a problem with a long history, starting with the convex body chasing problem of~\cite{FL93}
(a chased convex body can be viewed as a convex function that is $0$ in the body and $+\infty$
outside the body). Major breakthroughs were reported recently 
in~\cite{BBEKU18,BLLS19,ABCGL19,BKLLS20,AGGT20,Sel20}. 
However, once $k > 1$ servers are chasing, convexity (viewing the problem as one server chasing
in $(\RR^d)^k$) is lost. So, one aspect of our research is the study of non-convex chasing when 
there is still some underlying convex structure. Another aspect is that chasing with $k$ servers 
models online convex optimization in a distributed setting. 

A more concrete motivation is that a 
special case of this problem, not captured by the $k$-server setting, is dynamic $k$-clustering. 
Consider, for instance, $k$-median or $k$-means clustering of points in $\RR^d$ that arrive 
sequentially online. The partition into clusters is determined by the locations of the $k$ centers
of the clusters. When a point arrives, we pay the cost of adding it to the cluster with nearest
center (either before or after adjusting the centers and paying the cost of this movement). 
We can think of the current centers as a model 
predicting future input based on the pattern shown by past input. The cost function reflects the 
tradeoff between the quality of the prediction and the stability of the clustering. To the best of
our knowledge, this framework for online clustering has not been addressed previously. Some 
alternative notions of online clustering were inverstigated in~\cite{CCFM97,LSS16,CGKR19,BR20}.

The classical $k$-server problem can be viewed as a special case of this problem where each 
convex function is $0$ at the requested point and $+\infty$ elsewhere. This case demonstrates 
that it does not make sense to charge a penalty prior to movement (a setting which we call
{\em blind chasing}) without making some restrictive 
assumptions on the requested functions, for instance Lipschitz continuity. If the requested functions
are Lipschitz continuous, then it also makes sense to study this setting in the regret minimization
paradigm of online convex optimization~\cite{Sha12}. In this setting, the online algorithm chooses 
in each step an action to play---a point in the unit ball in $\RR^d$. Then it gets penalized for its
chosen action. In each step, the penalties for all possible actions form an $L$-Lipschitz convex 
function. Switching actions from step to step is not charged. The goal is to minimize the regret, 
an additive guarantee against the best static action, rather than a multiplicative guarantee against 
the best dynamic solution. In our setting, the online algorithm can choose simultaneously 
$k$ actions to play, it pays the minimum penalty for the $k$ choices, and it seeks to minimize
the regret against the best static choice of $k$ actions. In machine learning
this metric (best of $k$ actions) corresponds to the often reported top-$k$ accuracy (i.e., an
algorithm can make $k$ predictions, and is evaluated on the most accurate one of them).
The goal in designing algorithms is to have
the average regret per step vanishing quickly as the number of steps grows, or in other words
that the total regret grows at a rate far below linear.

Thus, all these problems are characterized by the dimension $d$, the number of servers $k$, and 
the Lipschitz continuity constant $L$, and the main goal is to bound as tightly as possible the 
competitive ratio or the regret as a function of these parameters.

\paragraph{Our results.}
We give a deterministic $O(k)$-competitive online algorithm for chasing convex functions on $\RR$
using $k$ servers. A $d$-competitive algorithm was known from previous work~\cite{AGGT20,Sel20}
for chasing convex functions in $\RR^d$ using one server. We show in contrast to our result and these
previous results that if both $k,d > 1$, there does not exist any algorithm achieving bounded
competitiveness, even for the more restricted problem of chasing convex bodies. (Chasing convex
bodies is equivalent to chasing convex functions for one server~\cite{BLLS19}, but might be easier for
multiple servers.) Despite this negative result, we show that nonetheless for the convex functions
that arise in online $k$-median and $k$-means clustering, and more generally for well-sharpened
convex functions (defined in Section~\ref{sec:wellsharp}), there exist competitive online algorithms. 
In particular, for the online $k$-median problem the adaptive online competitive ratio is $O(k)$ and 
the randomized (oblivious) competitive ratio is $\poly\log k$. For the online $k$-means problem the 
adaptive online competitive ratio is $O(k^2)$. In fact, these results apply to any metric space and
not just to Euclidean space. The meaning of {\em adaptive online} and {\em oblivious} 
follows~\cite{BBKTW90}. For the sake of brevity, it suffices to say that there isn't a 
significant difference between the adaptive online competitive ratio and the deterministic competitive 
ratio.

We also investigate regret guarantees for online convex optimization. Here we shall assume that 
the actions are in the unit ball in $\RR^d$, and the penalty functions are $1$-Lipschitz and convex. 
All the results scale linearly with the Lipschitz constant. It was known that gradient descent achieves 
a regret of $O(\sqrt{T})$ after $T$ steps~\cite{Zin03}. If the penalty functions are $\alpha$-strongly 
convex, then the upper bound drops to $O(\alpha^{-1} \log T)$~\cite{HAK07}. In the 
$k$-action setting, there is trivially a randomized online algorithm that achieves regret $O(\sqrt{dkT\log T})$,
simply by discretizing the unit ball and using the scheme of prediction from expert advice~\cite{CFHHSW97}.
This observation was already made in~\cite{CGKR19} for the special case of the $k$-means objective.
Notice that such a strategy requires time exponential in $dk$. We show
that three main aspects of the upper bound are unavoidable, implying a price that needs to be
paid for the loss of convexity when $k > 1$. Firstly, we show that unless P$=$NP, the 
super-polynomial dependence on $d$ is required to get regret below $T/d$, even for $k=2$.
Secondly, we show that in contrast to the $k=1$ case where randomization does not help, 
for $k > 1$ no deterministic online algorithm can achieve sublinear regret. Thirdly, we show that 
even for strongly convex functions, randomized algorithms cannot achieve regret better than 
$\Omega\left(k^{-2/(d-1)}\cdot\sqrt{kT}\right)$ if $k,d > 1$, as opposed to the logarithmic regret 
if $k=1$. These results leave open the question of whether there exists a method to achieve sublinear 
regret that is independent of the dimension $d$. We resolve this question for the special case 
of linear penalty functions. We show that in this case, regret of $\tilde{O}(\sqrt{kT})$ is achievable 
in any dimension.
The proof of this latter result is information-theoretic, and does not give an explicit algorithm, though one can always
solve the minimax problem to obtain an algorithm that runs in time exponential in $T$.



\section{Chasing Convex Functions with ${\mathbf k}$ Servers}

Consider the set ${\cal F}$ of positive convex functions 
$f:\RR^d\rightarrow (0,+\infty]$.
In the $k$-chasing of convex functions problem, $k$ servers are placed 
at some initial configuration in $\RR^d$, and must serve a sequence of
requests from ${\cal F}$. In a configuration 
$\{x_1,x_2,\dots,x_k\}\in{\RR^d \choose k}$, the cost of serving a request
$f\in{\cal F}$ is $\min_{1 \leq i \leq k} f(x_i)$. Upon receiving a request $f$, one
or more servers can move to improve the service cost, but in doing so
they incur a movement cost equal to the total distance traveled. The
distance is measured according to some metric on $\RR^d$, which we
generally assume to be the Euclidean ($L^2$) metric. Our goal is
to investigate the design of competitive online algorithms for this problem,
minimizing the worst case ratio between the total service and movement
cost of the algorithm and the same measurement for an optimal prescient
solution.




\subsection{The line case}

We present here a deterministic $O(k)$-competitive algorithm for $k$-chasing of convex functions
in $\RR$ (i.e., $d=1$) with any $k \geq 1$.

For $k=1$ the following simple strategy from \cite{bansal20152}
is competitive:
Upon receiving a request $f\in {\cal F}$ when the server is in position $x\in\RR$, move
towards $\argmin f$ while the service cost at the current location exceeds the movement
cost in the current step. Notice that the condition for terminating the move may never be
reached, and in this case the server serves $f$ at $\argmin f$.

Our proposed algorithm for general $k$ is a double coverage (a la~\cite{CKPV91}) generalization of 
the simple $k=1$ algorithm. It is defined as follows: Naturally, if a server is located at $\argmin f$, do nothing. 
Otherwise, if $\argmin f$ falls outside the minimal interval containing the $k$ servers, move 
the nearest server (one of the two extremes) towards $\argmin f$ while the service cost in its
current location exceeds its movement cost in the current step. Otherwise, $\argmin f$ is located 
between the positions of two adjacent servers. Move both of them towards $\argmin f$ at the same 
rate, while the service cost (incurred at the current location of one of the two moving servers) exceeds
the movement cost in the current step and $\argmin f$ is not yet reached by a server.

\begin{thm}\label{thm: one-dim competitive}
The above algorithm is $4k$-competitive for $k$-chasing convex functions in $\RR$.
\end{thm}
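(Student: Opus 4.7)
The plan is to mimic the classical double-coverage potential-function argument of Chrobak--Karloff--Payne--Vishwanathan for $k$-server on the line, augmenting it to handle the service cost contributed by $f$. Write ALG's sorted positions as $x_1 \le \cdots \le x_k$ and OPT's as $y_1 \le \cdots \le y_k$, and define
$$\Phi \;=\; \alpha \sum_{i=1}^{k} |x_i - y_i| \;+\; \beta \sum_{1 \le i<j \le k} |x_j-x_i| \;=\; \alpha M + \beta D,$$
with constants $\alpha,\beta$ to be calibrated; a natural first try is $\alpha=2k$ and $\beta=2$. The aim is to prove the per-step amortized inequality $\mathrm{cost}_{\mathrm{ALG}} + \Delta\Phi \le 4k\cdot\mathrm{cost}_{\mathrm{OPT}}$; summing and using $\Phi\ge0$ then yields the theorem.

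The effect of OPT on $\Phi$ is easy: a movement by $\delta$ raises $M$ by at most $\delta$ (the sorted matching cost is $1$-Lipschitz in each coordinate on the line, even across reorderings), giving $\Delta\Phi\le\alpha\delta\le 4k\delta$, and OPT's service costs leave $\Phi$ unchanged. For ALG I would view the motion continuously, parameterised by the arclength $t$ traveled by one of the moving servers. In the \emph{boundary case} (say $\argmin f<x_1$), only $x_1$ moves leftward at rate $1$, so $\dot D=k-1$; the algorithm halts at time $t^*$ either by reaching $\argmin f$ or by the service-equals-movement rule $f(x_1-t^*)=t^*$, incurring ALG cost at most $2t^*$. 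In the \emph{interior case} ($\argmin f\in(x_j,x_{j+1})$), both of $x_j,x_{j+1}$ move inward at rate $1$, so $\dot D=-2$; total movement is $2t^*$ and stopping occurs when $\min(f(x_j+t^*),f(x_{j+1}-t^*))=2t^*$, giving ALG cost $4t^*$. With $\beta=2$ the drop in $\beta D$ already pays for all of ALG's cost in the interior case, whereas in the boundary case an additional drop in $\alpha M$ of size $2k\,t^*=\alpha t^*$ is required, forcing $\Delta M\le-t^*$.

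The main obstacle is controlling $\Delta M$ via the interplay of convexity of $f$ with the stopping rule. I expect the right argument to be a dichotomy applied at each infinitesimal time $t\in[0,t^*]$: either the matched OPT server $y_i$ lies on the \emph{favorable} side of the moving server $x_i(t)$, so that $|x_i-y_i|$ shrinks at the required unit rate, or it sits on the \emph{unfavorable} side---the same side of $\argmin f$ as $x_i(t)$ and farther from $\argmin f$---in which case convexity of $f$ gives $f(y_i)\ge f(x_i(t))=t$, so OPT has already paid at least $t$ in service cost, enough to absorb the adverse change in $M$ when scaled by $4k$. Formally I would partition $[0,t^*]$ into two (possibly interleaved) phases along this dichotomy, charge the favorable phase to the drop in $\Phi$ and the unfavorable phase to $4k\cdot s_{\mathrm{OPT}}$. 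The same convexity-plus-stopping-rule reasoning addresses the residual $\Delta M$ in the interior case, using that if no $y_i$ lies in $(x_j+t^*, x_{j+1}-t^*)$ then OPT's service cost is at least $2t^*$, whereas if some $y_i$ does lie there then the sorted matching's structure limits $\Delta M$.

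Once the per-step inequality is established, summing over the request sequence and using $\Phi\ge0$ together with a bounded initial potential yields $\mathrm{ALG}\le 4k\cdot\mathrm{OPT}+\Phi_0$, the claimed $4k$-competitiveness. The alternative halting rule in which a server actually reaches $\argmin f$ is absorbed by the same dichotomy, using $f(\argmin f)\le f(y_i)$ for every $i$, and matching reorderings during a step are handled by replacing the sorted matching locally with its optimal-transport interpretation.
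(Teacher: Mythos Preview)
Your approach is essentially the paper's own: the same potential $\Phi = 2k\cdot M + 2D$, the same accounting of OPT's move as raising $\Phi$ by at most $2k$ times its movement, and the same split of ALG's motion according to whether the moving server has yet passed the nearest adversary server. Your ``favorable/unfavorable'' dichotomy is exactly the paper's decomposition into a first part (moving toward a matched adversary server, where $M$ decreases) and a second part (past all adversary servers in the direction of motion, where convexity together with the stopping rule forces $\serve(Y',f)\ge \serve(X'',f)\ge \match(X,X'')$). Two small imprecisions to clean up: during the motion you have $f(x_i(t))\ge t$ rather than $=t$ (equality only at the stopping time under the service-equals-movement rule), and ``ALG cost at most $2t^*$'' and ``ALG cost $4t^*$'' fail in the sub-case where a server actually reaches $\argmin f$, since then $\serve(X'',f)=f(\argmin f)\ge \match(X,X'')$; you already flag this at the end, and the fix is simply that $f(\argmin f)\le \serve(Y',f)$ supplies the needed slack on the right-hand side.
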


The proof of this result combines the classical potential function of double coverage on a line with the analysis of the simple $k=1$ algorithm for convex function chasing. We defer it to Appendix \ref{app:A}.

\subsection{Two servers in the plane is already hard}

We showed that if either $k=1$ or $d=1$, then there exist competitive algorithms for $k$-chasing convex functions in $\RR^d$. We now prove that these are in fact the {\em only} cases for which one can be competitive:

\begin{thm} \label{thm:lb0}

When $k,d\geq 2$, there do not exist competitive algorithms for $k$-chasing convex bodies in $\RR^d$. This holds even for randomized algorithms against an oblivious adversary.

\end{thm}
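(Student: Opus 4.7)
The plan is to construct a randomized oblivious adversary against which no deterministic algorithm achieves bounded competitive ratio; Yao's minimax principle then rules out randomized algorithms as well. I would first reduce to the case $k=d=2$ by embedding $\RR^2$ as the first two coordinates of $\RR^d$ and parking the spare $k-2$ servers at a fixed point far outside the support of every adversarial request, so those servers never participate in service. This embedding preserves the competitive ratio up to a bounded additive term, so it suffices to exhibit the lower bound for $k=d=2$.

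For $k=d=2$, the goal is to design a distribution $\mu$ on request sequences in $\RR^2$ satisfying: (i) $\E_\mu[\operatorname{OPT}(\sigma)] = O(1)$ uniformly in the length of the prefix, and (ii) for every deterministic online algorithm $\mathcal{A}$, the expected cost $\E_\mu[\operatorname{cost}(\mathcal{A},\sigma)]$ grows without bound in the length. The intended shape of the construction is an iid bit stream driving the requests: each request is a convex body (concretely a thin slab or line segment) whose orientation or position is determined by the current bit, while two fixed anchor points $A,B$ remain feasible for every outcome, so that OPT serves the whole sequence with $O(1)$ movement by parking one server near $A$ and one near $B$. The essential geometric ingredient is that the two-server feasibility region
\[
\bigl\{(p_1,p_2)\in(\RR^2)^2 \ : \ \{p_1,p_2\}\cap K \neq \emptyset\bigr\}
\]
is a union of two convex sets rather than a single convex set; the two bodies associated to the two bit outcomes would be chosen so that their feasibility regions agree near the anchor configuration $(A,B)$ but split into two ``branches'' separated by a uniform constant distance $\delta>0$ away from the anchors.

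The lower bound is then extracted by a one-step conditional argument applied to any deterministic algorithm. Conditioned on the bit history, the algorithm's two-server configuration just before step $t$ must lie in whichever branch the next request selects; if this configuration is not essentially equal to $(A,B)$, then with probability at least $1/2$ the bit points to the opposite branch and forces a move of size at least $\delta$, while if the algorithm \emph{is} near $(A,B)$ one additionally exploits a secondary degree of freedom in the construction to force $\Omega(\delta)$ movement there as well. Summing per-step expected movements over $T$ steps yields $\Omega(\delta T)$ expected online cost against $O(1)$ expected OPT cost; taking $T$ arbitrarily large makes the expected ratio exceed any prescribed constant $C$, ruling out competitiveness.

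The main obstacle is the explicit geometric realization of the branch-separation property while retaining $(A,B)$ as a universally feasible configuration, plus handling the boundary case where the online algorithm sits near the anchors. The tension is that requirement (i) pulls toward bodies with large common intersection, whereas requirement (ii) pulls toward bit-sensitive shape. In one dimension this tension is unresolvable---the feasibility set of a convex body on a line behaves essentially convexly for two servers, which is consistent with the $O(k)$-competitive algorithm of Theorem~\ref{thm: one-dim competitive}---and it is precisely the extra planar freedom of rotating or tilting the two candidate bodies around the anchor points that makes branch separation possible. Verifying the quantitative separation constant, checking that the anchors remain feasible under every bit, and ruling out ``intermediate'' configurations that could serve both branches at sublinear total cost are the technical details that the full proof must nail down.
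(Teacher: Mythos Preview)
Your reduction from general $k,d$ to $k=d=2$ by embedding and parking spare servers is correct and matches the paper. The core argument for $k=d=2$, however, has a genuine gap.

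You propose an iid bit stream in which a fixed pair of anchor points $A,B$ is a universally feasible configuration, so that OPT parks at $(A,B)$ with $O(1)$ total cost. But if $(A,B)$ serves every possible request, then the online algorithm can also park at $(A,B)$ and pay zero cost---it is then trivially competitive. You flag this as the ``boundary case where the online algorithm sits near the anchors'' and appeal to an unspecified ``secondary degree of freedom'' to force movement there, but this is precisely the crux and cannot be patched: any mechanism that forces ALG to move away from $(A,B)$ must make $(A,B)$ infeasible for some request, and that same request then also costs OPT. With an iid source and a \emph{fixed} optimal configuration, you simply cannot separate ALG from OPT. An unbounded lower bound requires OPT's good configuration to depend nontrivially on the realized sequence.

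The paper avoids this contradiction by taking a different route: it does not build a direct random construction at all, but instead \emph{reduces} $2$-chasing convex bodies in $\RR^2$ to single-server chasing of unions of two intervals $[a_1,a_2]\cup[b_1,b_2]\subseteq[0,1]$, and then reduces that to the $(n-1)$-server problem on $n$ points on a line, invoking the known randomized lower bound that grows with $n$. The geometric gadget is concrete: two horizontal segments $\ell_0,\ell_1$ at heights $0$ and $1$ together with two crossing quadrilaterals $Q(a_1,a_2,b_1,b_2)$ and $Q(b_1,b_2,a_1,a_2)$; requesting all four repeatedly forces the two servers to share an $x$-coordinate lying in $[a_1,a_2]\cup[b_1,b_2]$. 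Crucially, OPT here is \emph{not} a fixed configuration but depends on the whole request sequence, exactly as in $k$-server; the unbounded ratio comes from taking $n\to\infty$, not from OPT being bounded while ALG diverges.
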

Note that the above impossibility already holds for chasing convex bodies, which is a special case of chasing convex functions.

To prove this result we first show, with a classical argument, that for a single server one cannot chase union of two intervals on the line, see Lemma \ref{lem:MTSlb}. Then the proof of Theorem \ref{thm:lb0} proceeds by showing that one can simulate the setting of Lemma \ref{lem:MTSlb} with a certain sequence of convex sets to be chased in the plane with two servers.

\begin{lemma}

\label{lem:MTSlb} 

Chasing unions of two intervals $[a_1,a_2]\cup [b_1,b_2]\subseteq [0,1]$ with a single server has infinite competitive ratio, even with randomization against the oblivious adversary. 

\end{lemma}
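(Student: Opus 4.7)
My plan is to combine Yao's minimax principle with the classical randomized lower bound for metrical task systems (MTS). By Yao's principle, to rule out any bounded randomized competitive ratio against the oblivious adversary, it suffices to exhibit, for every constant $c$, a distribution over request sequences under which every deterministic online algorithm has expected cost at least $c$ times the expected offline optimum.

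To construct such distributions, I would embed an $n$-state MTS into the chasing problem. Fix an integer $n$ and evenly spaced reference positions $p_1 < \dots < p_n \in [0,1]$ (say $p_i = i/(n+1)$). For each $i \in \{2,\dots,n-1\}$ and a small $\eta > 0$, the ``forbid-$p_i$'' set $R_i := [0, p_i - \eta] \cup [p_i + \eta, 1]$ is a union of two intervals, hence a valid chasing request; more generally, the complement in $[0,1]$ of any contiguous block of reference points is a two-interval request. An adversary issuing only such ``block-exclusion'' requests simulates an MTS on the line metric on $\{p_1,\dots,p_n\}$ with $\{0,+\infty\}$-valued cost functions, and up to constant factors the cost of any server-based chasing algorithm matches its cost in this simulated MTS (standard arguments let us assume the server stays close to the grid).

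I would then invoke the classical randomized MTS lower bound against the oblivious adversary, which states that the competitive ratio grows without bound with the number of states $n$. Pushing this bound through the above embedding and letting $n \to \infty$ yields the claimed infinite competitive ratio. The main obstacle is verifying that the hard input distribution witnessing the MTS lower bound can be realized using only the restricted ``block-exclusion'' cost functions available through two-interval requests: the textbook Blum--Karloff--Rabani--Saks construction uses more general $\{0,1\}$-valued cost vectors on a uniform metric, so one has to adapt it (e.g., via a coupon-collector-style argument where the excluded block rotates through a random permutation of $[n]$, or a random walk on nested blocks) to show that this restricted distribution still forces every deterministic online algorithm to pay $\omega(1)$ times the offline optimum on the line metric.
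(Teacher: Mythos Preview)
Your skeleton is exactly the paper's: discretize $[0,1]$ into $n$ reference positions and simulate an $n$-state online problem by issuing ``forbid-$p_i$'' requests $R_i=[0,p_i-\eta]\cup[p_i+\eta,1]$. The difference is which black-box lower bound gets invoked, and this difference is precisely where you run into the obstacle you flag.

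The paper does \emph{not} cite the MTS lower bound; it cites the randomized $(n-1)$-server lower bound on the $n$-point line metric (e.g.\ \cite{KRR91}), which already grows without bound in $n$. The translation is that the single chaser plays the role of the \emph{hole} in an $(n-1)$-server configuration: a $k$-server request at point $i$ says ``the hole must leave $i$'', which is exactly your single-point-forbid request $R_i$. Since $k$-server requests are \emph{by definition} single points, they translate to two-interval sets with no loss and no need to realize general $\{0,1\}$-valued MTS cost vectors. Your obstacle---that the Blum--Karloff--Rabani--Saks construction uses richer cost functions on a uniform metric and must be adapted to contiguous block exclusions on the line---is real for the MTS route, but it simply evaporates once you switch the citation from MTS to $(n-1)$-server. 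You also then do not need to appeal to Yao's principle or construct a distribution yourself; the $k$-server lower bound is stated directly for randomized algorithms against oblivious adversaries.

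In short: the idea is right and matches the paper, but as written the proof is incomplete because the hard MTS distribution you would need is not known to be realizable with two-interval requests, and your proposed fixes (coupon-collector, nested-block random walk) are not worked out. Replacing ``MTS lower bound'' by ``$(n-1)$-server lower bound on $n$ line points'' closes the gap immediately and yields the paper's proof.
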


\begin{proof}
The lower bound uses the following well-known result (for instance~\cite{KRR91}). There exists a monotonically increasing function
$f:\NN\rightarrow\NN$ for which $f(n)\rightarrow\infty$ as $n\rightarrow\infty$ such that the following holds. For every $n\in\NN$,
$n\ge 2$, every randomized online algorithm for the $(n-1)$-server problem on the point set $\{1,2,\dots,n\}$ endowed with the
distance function $d(i,j) = |i-j|$ incurs a competitive ratio exceeding $f(n)$ against the oblivious adversary.

The lemma follows from a simple reduction of the above $k$-server problem ($k=n-1$) to chasing unions of two intervals on a line.
For any hypothesized upper bound $B$ on the competitive ratio of chasing unions of two intervals, choose $n$ such that $f(n) > 4B$,
and use the following reduction. Partition the interval $[0,1]$ into $2n-1$ equal-sized closed intervals $I_1,I_2,\dots,I_{2n-1}$, where
$I_j = \left[\frac{j-1}{2n-1},\frac{j}{2n-1}\right]$. The $n$ odd-indexed intervals correspond to the $n$ points $\{1,2,\dots,n\}$.
The position of the chaser indicates the location of the ``hole'' (i.e., the single point that is not occupied by a server). Clearly, the chaser
may choose to be placed in the interior of an even-indexed interval. In this case, we interpret the position of the hole as being in the
nearest odd-indexed interval, with the middle point associated (arbitrarily) to the right. A $k$-server request at point $i$ is mimicked by 
requesting $\left[0,\frac{2i-3}{2n-1}\right]\cup\left[\frac{2i}{2n-1},1\right]$. (Notice that if $i=1$ then the first interval is empty and if $i=n$ 
then the second interval is empty.) Thus, if the chaser's location interprets as point $i$, it must move a distance of at least $\frac{1}{4n-2}$ 
to chase the new request. In fact, to move to a location that would interpret as point $j$, it must move a distance of more than 
$\frac{|i-j|}{4n-2}$ and at most $\frac{2|i-j|}{2n-1}$. Therefore, the lower bound construction for the $k$-server problem, when translated
this way, forces a competitive ratio of at least $f(n)/4 > B$. This contradicts the hypothesis that $B$ is an upper bound on the competitive
ratio of chasing unions of two intervals. As any value $B\ge 1$ is contradicted, the competitive ratio is unbounded.
\end{proof}

We now proceed to the proof of the main lower bound.

\begin{proof}[Proof of Theorem \ref{thm:lb0}.]

We first assume $k=d=2$ and use the following gadget. For an interval $[a,b]$ let $[a,b]_i$ denote the line segment from $(a,i)$ to $(b,i)$.For $(a,b,c,d)$ we consider the quadrilateral $Q(a,b,c,d)$ with corners $(0,a),(0,b),(1,c),(1,d)$. For $a_1<a_2<b_1<b_2$ consider the following family $S(a_1,a_2,b_1,b_2)$ of $4$ sets:

\begin{enumerate}
    \item $\ell_0=[0,1]_0$
    \item $\ell_1=[0,1]_1$
    \item $Q(a_1,a_2,b_1,b_2)$
    \item $Q(b_1,b_2,a_1,a_2).$
\end{enumerate}

It is easy to see that the possible pairs $(s_1,s_2)$ of server positions which together meet all $4$ sets fall (up to ordering) into exactly two categories: 

\begin{enumerate}
    \item $s_1\in [a_1,a_2]_0$ and $s_2\in [a_1,a_2]_1$.
    \item $s_1\in [b_1,b_2]_0$ and $s_2\in [b_1,b_2]_1$.
\end{enumerate}

\begin{figure}[ht]
\caption{An example configuration $S(a_1,a_2,b_1,b_2)$ together with servers $s_1,s_2$ satisfying all $4$ requests $\ell_0,\ell_1, Q(a_1,a_2,b_1,b_2), Q(b_1,b_2,a_1,a_2).$ By using these sets, $2$-chasing convex bodies contains the problem of chasing the union of $2$ intervals $[a_1,a_2]\cup [b_1,b_2]\subseteq [0,1]$. This is impossible as shown in Lemma~\ref{lem:MTSlb}.  }
\centering
\includegraphics[width=1.0\textwidth]{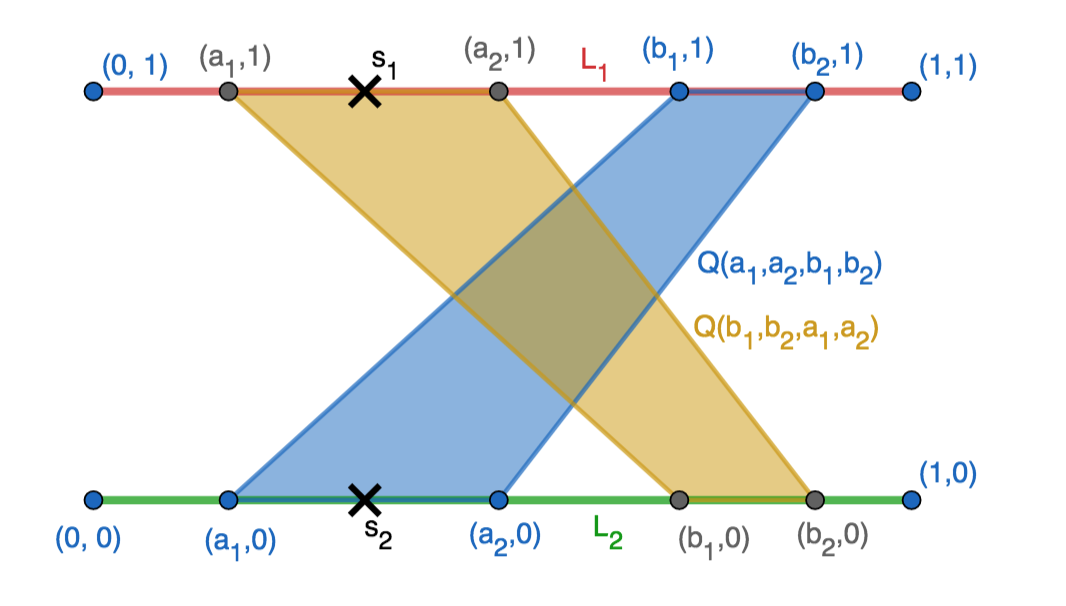}
\end{figure}

By repeating the $4$ sets $S(a_1,a_2,b_1,b_2)$ many times for each choice of $(a_1,a_2,b_1,b_2)$, we may assume that any algorithm moves into one of the two above types of configurations when given each new tuple $(a_1,a_2,b_1,b_2)$. We may then assume WLOG that $s_1$ and $s_2$ share $x$-coordinates throughout the entire process (e.g., by assuming that movement within a requested set is free for the algorithm). \\

Having made these reductions, the remaining problem is precisely chasing unions of two intervals in $[0,1]$. By Lemma~\ref{lem:MTSlb} this is impossible to do competitively, even with randomization. This proves the theorem in the case $k=d=2$.\\

Finally we explain how to reduce to the case $k=d=2$. For $d>2$, we may keep all requests in a $2$ dimensional subspace, which immediately reduces to $d=2$. For $k>2$, we may fix $k-2$ far away points and request each of them many times between any of the $S(a_1,a_2,b_1,b_2)$ requests. This forces any algorithm to incur huge cost unless it keeps a server at each of these far away points, reducing attention to $k=2$.

\end{proof}

\section{Beyond impossibility with well-sharpened functions}
In this section we go beyond the impossibility result of Theorem \ref{thm:lb0} by considering a special class of cost functions. We introduce the class of {\em well-sharpened} functions, which include in particular functions of the form $f_t(x)=|x-x_t|^{\gamma}$ for any $\gamma\geq 1$ 
(that is, it covers the $k$-median and $k$-means objectives), and we show how to $k$-chase competitively such 
functions in $\RR^d$, for any $k$ and $d\geq1$. In fact the results in this section apply more generally to arbitrary metric space. Thus for the rest of the section we fix a metric space $(X, \mathrm{dist})$.

In Section \ref{sec:kmedian} we start by looking specifically at the case of the $k$-median objective, that is the cost function $f_t$ is the form $f_t(x)=c_t\mathrm{dist}(x,x_t)$ for some $c_t \in (0,1)$ (we assume wlog $c_t <1$, since one can always repeat such cost functions if necessary), and we show in this case a reduction to $k$-server in $X$. Then in Section \ref{sec:wellsharp} we introduce our class of well-sharpened functions. Finally in Section \ref{sec:reductiontosharpness} we reduce well-sharpened functions to $k$-median. We remark that our algorithms have the additional property that they only move to the minimizer of a request $f_t$.

\subsection{Online $k$-median} \label{sec:kmedian}
We reduce online $k$-median to $k$-server by following a similar reduction as in \cite{antoniadis2016chasing} where the authors reduced chasing lazy convex bodies to chasing convex bodies:
 
\begin{thm}

\label{thm:kmedian}

Let $A_S$ be a randomized $k$-server algorithm which is $\alpha$-competitive against the oblivious (resp. adaptive online) adversary. Define an online $k$-median algorithm $A_F$ as follows. Upon arrival of a new request $f_t(x)=c_t\mathrm{dist}(x,z_t)$, $A_F$ will pass the request $z_t$ to $A_S$ with probability $c_t$, and ignore the request with probability $1-c_t$. Then $A_F$ is an $4\alpha$-competitive randomized algorithm for online $k$-median against the oblivious (resp. adaptive online) adversary. 

\end{thm}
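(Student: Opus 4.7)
The plan is to relate the expected cost of $A_F$ on the $k$-median instance to the cost of $A_S$ on the random filtered sequence, and separately to relate the $k$-server optimum on that filtered sequence to the $k$-median optimum $\mathrm{OPT}_F$. Let $\xi_t \sim \mathrm{Bernoulli}(c_t)$ be the independent filtering coin at step $t$, let $\sigma = \{t : \xi_t = 1\}$ be the random subsequence passed to $A_S$, let $X_t$ be the configuration of $A_S$ after step $t$, and let $M_t$ be its movement at step $t$ (so $M_t = 0$ when $\xi_t = 0$).

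The first step is to show $\E[\mathrm{OPT}_S(\sigma)] \leq 2\,\mathrm{OPT}_F$. Given any trajectory $(Y_t)$ achieving $\mathrm{OPT}_F$, I would build a feasible $k$-server trajectory on $\sigma$ by following $Y_{t-1} \to Y_t$ at every step and, whenever $t \in \sigma$, inserting a detour in which the server of $Y_t$ closest to $z_t$ visits $z_t$ and returns, adding at most $2\,\mathrm{dist}(Y_t, z_t)$ to the movement (with an analogous detour at the start of step $t$ for the ``before-move'' service convention). Since $(\xi_t)$ is independent of $(Y_t)$, taking expectations gives total extra movement at most $2 \sum_t c_t\,\mathrm{dist}(Y_t, z_t) \leq 2\,\mathrm{OPT}_F$, hence $\E[\mathrm{OPT}_S(\sigma)] \leq 2\,\mathrm{OPT}_F$.

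The second step is to bound $A_F$'s expected cost by twice $A_S$'s expected cost on $\sigma$. The cost of $A_F$ at step $t$ is the movement term $M_t$ plus a service term bounded by $c_t\,\mathrm{dist}(X_{t-1}, z_t)$ (nonzero only when $\xi_t = 0$ under the after-move convention, or always under the before-move one, in both cases bounded by $c_t\,\mathrm{dist}(X_{t-1}, z_t)$). The crucial inequality is
\[
c_t \,\E[\mathrm{dist}(X_{t-1}, z_t)] \leq \E[M_t],
\]
because conditional on $\xi_t = 1$ the algorithm $A_S$ must move some server from $X_{t-1}$ to $z_t$, so $M_t \geq \mathrm{dist}(X_{t-1}, z_t)$, and $\xi_t$ is independent of $X_{t-1}$. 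Summing over $t$ then gives $\E[\mathrm{cost}(A_F)] \leq 2 \sum_t \E[M_t] = 2\,\E[\mathrm{cost}(A_S, \sigma)]$.

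Combining the two steps, in the oblivious case $\sigma$ is independent of $A_S$'s internal randomness, so conditioning on each realization of $\sigma$ yields $\E[\mathrm{cost}(A_S, \sigma)] \leq \alpha\,\E[\mathrm{OPT}_S(\sigma)] \leq 2\alpha\,\mathrm{OPT}_F$, whence $\E[\mathrm{cost}(A_F)] \leq 4\alpha\,\mathrm{OPT}_F$. The adaptive online case is analogous, since an adaptive online adversary against $A_F$ naturally induces one against $A_S$ by tracking $A_F$'s coin flips together with the observed $A_S$-moves. I expect the main subtlety to be the probabilistic bookkeeping, in particular isolating the independence of $\xi_t$ from $X_{t-1}$ and correctly invoking $A_S$'s competitive guarantee on the random sequence $\sigma$; the choice of service convention (before vs.\ after move) is a minor nuisance that the detour construction can accommodate either way.
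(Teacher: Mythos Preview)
Your proof is correct and follows essentially the same three-step chain as the paper: $\E[\mathrm{cost}(A_F)]\le 2\,\E[\mathrm{cost}(A_S,\sigma)]\le 2\alpha\,\E[\mathrm{OPT}_S(\sigma)]\le 4\alpha\,\mathrm{OPT}_F$, with the detour construction for the last inequality and the observation that $M_t\ge \mathrm{dist}(X_{t-1},z_t)$ on the event $\xi_t=1$ for the first. Your treatment is in fact a bit more explicit than the paper's about the independence of $\xi_t$ from $X_{t-1}$ and about how the adaptive online adversary carries over, but the argument is the same.
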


Thus, a randomized algorithm for $k$-server that is $\mathrm{polylog}(k)$-competitive against the oblivious adversary 
(e.g., see the proposed approach by J. Lee~\cite{Lee18} for such a result) would imply a randomized algorithm for online 
$k$-median that is $\mathrm{polylog}(k)$-competitive against the oblivious adversary. Similarly the deterministic 
$(2k-1)$-competitive algorithm (work-function algorithm) for $k$-server~\cite{KP95} implies that the \emph{randomized work-function algorithm} for online 
$k$-median obtained from Theorem~\ref{thm:kmedian} is $O(k)$-competitive against the adaptive online adversary. Interestingly note that for $X=\RR^d$ these competitive 
ratios are dimension-free, in stark contrast with general convex functions where even for $k=1$ a dimension dependence 
is necessary. We also note that the randomized work-function algorithm is $O(k \cdot \mathrm{polylog}(k))$-competitive against an adaptive offline
for $k$-median through the main result of~\cite{BBKTW90}, which states that an algorithm's adaptive offline compettiive ratio is at most its adaptive online competitive ratio times the best oblivious competitive ratio of any algorithm. In particular this implies the existence of a deterministic algorithm with the same $O(k \cdot \mathrm{polylog}(k))$ competitive ratio.

\begin{proof}[Proof of Theorem \ref{thm:kmedian}.]

Let $I_F$ be the input sequence of pairs $(c_t,z_t)$ and $I_S$ the random subsequence of points $z_t$. Let $OPT_F$ and $OPT_S$ be offline optima for the two problems. We claim that:

\[\mathbb E[Cost(A_F)]\stackrel{(1)}{\leq} 2\cdot \mathbb E[Cost(A_S)]\stackrel{(2)}\leq 2\alpha\cdot  \mathbb E[Cost(OPT_S)]\stackrel{(3)}\leq 4\alpha\cdot Cost(OPT_F).\]

To prove $(1)$ we observe that if at any time $t$ the $k$-server algorithm would pay movement cost $d_t$ if $z_t\in I_S$ (where $d_t$ depends on the previous randomness), then $A_S$ will have expected cost $c_td$ while $A_F$ will have expected cost at most $2c_td$ for combined service and movement cost. (Here we use the fact that $A_F$ pays for only the closest server.) Inequality $(2)$ is by definition of $A_S$ being competitive.

To prove $(3)$, we note that $OPT_F$ can be turned into a solution to $I_S$ with expected cost at most $2\cdot Cost(OPT_F)$. Indeed, simply following the path of $OPT_F$ and moving the closest server to the random subsequence of requests and back achieves this. If $OPT_F$ has service cost $c_t\mathrm{dist}(x_{i,t},z_t)$ for this request, then the corresponding expected $k$-server cost is $2c_t\mathrm{dist}(x_{i,t},z_t)$.\\

The above argument was in the oblivious case. However essentially the same proof works for adaptive online adversaries. An adaptive online adversary for $k$-median is also an adaptive online adversary for the randomized instance of $k$-server, again with at most double the cost on average.

\end{proof}

\subsection{Well-sharpened functions and Move-to-Minimum Chasing} \label{sec:wellsharp}
We now introduce a new class of functions, for which we will be able to do a reduction to online $k$-median.

\begin{defn}

We say a function $f:X\to\mathbb R^+$ is $(\alpha,\beta)$-\emph{well-sharpened} (with $\beta<1<\alpha$) if $\alpha\mathrm{dist}(y,z)\geq \mathrm{dist}(x,z)$ implies that $\frac{\beta f(y)}{\mathrm{dist}(y,z)}\geq \frac{f(x)}{\mathrm{dist}(x,z)}$. 

\end{defn}

For instance, the cost function $f_t(x)=\mathrm{dist}(x,z_t)^{\gamma}$ is $(\alpha,\alpha^{\gamma-1})$ well-sharpened for any $\alpha >1$ in any metric space. In particular note that well-sharpened functions include some non-convex functions. Any $\kappa$ well-conditioned convex function is $(\alpha,\kappa\alpha)$ well-sharpened in any normed space. Moreover a positive linear combination of $(\alpha,\beta)$ well-sharpened functions is $(\alpha,\beta)$ well-sharpened.

For a $(\alpha,\beta)$ well-sharpened function $f$ around the point $z$, we define its $k$-median replacement $\tilde f^{x}$ relative to input point $x$ to be the function $\tilde f^{x}(y)=c \cdot \mathrm{dist}(y,z)$ where $c=\frac{f(x)}{\mathrm{dist}(x,z)}$ is chosen so that $f(x)=\tilde f^x(x)$. Crucially note that:
\begin{equation} \label{eq:forlater}
\tilde{f}^x(y) \geq \frac{1}{\alpha} \tilde{f}^x(x) \Leftrightarrow \alpha \mathrm{dist}(y,z) \geq \mathrm{dist}(x,z) \Rightarrow f(y) \geq \frac{f(x) \cdot \mathrm{dist}(y,z)}{\beta \mathrm{dist}(x,z)} = \frac{1}{\beta} \tilde{f}^x(y) \,.
\end{equation}

We define a restricted class of chasing algorithms which includes all $k$-server based $k$-median algorithms from Theorem~\ref{thm:kmedian}.

\begin{defn}

A \emph{move to minimum} (MTM) algorithm for chasing well-sharpened functions responds to any request by moving a server to the minimum of the request or not moving any servers.

\end{defn}

\subsection{From well-sharpened functions to $k$-median} \label{sec:reductiontosharpness}

We consider here a variant of the chasing problem which we term \emph{blind chasing}: the service cost $f_t$ is paid {\em before} the algorithm can move. Clearly this modification does not affect OPT, and a priori it makes the problem more difficult for ALG. In fact blind chasing is no more difficult than regular chasing for $1$-Lipschitz functions. Note that the remainder of this section will only consider adaptive adversaries.

\begin{lemma} \label{lem:blind1}

A $C$-competitive chasing algorithm against an adaptive sequence of requests which are $1$-Lipschitz is also $2C$-competitive for blind chasing against the same request sequence.

\end{lemma}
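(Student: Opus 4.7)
The plan is to construct a blind-chasing algorithm $A'$ out of the given $C$-competitive (regular) chasing algorithm $A$ by copying its moves, and then compare the two cost accountings term-by-term using the $1$-Lipschitz hypothesis.

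Concretely, I would define $A'$ as follows. Suppose $A$, on the request sequence $f_1,f_2,\dots$, passes through configurations $x_0,x_1,x_2,\dots$, so that at step $t$ it moves from $x_{t-1}$ to $x_t$ after seeing $f_t$ and pays $\mathrm{dist}(x_{t-1},x_t)+f_t(x_t)$. Let $A'$ make the identical moves, i.e.\ pass through the same configurations $y_t=x_t$, but with the blind accounting: at step $t$ it sits at $y_{t-1}=x_{t-1}$, pays the service cost $f_t(x_{t-1})$ first, and then moves to $x_t$ at movement cost $\mathrm{dist}(x_{t-1},x_t)$.

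The key per-step comparison then uses the $1$-Lipschitz property of $f_t$:
\[
f_t(x_{t-1}) \leq f_t(x_t) + \mathrm{dist}(x_{t-1},x_t),
\]
so that the blind cost of $A'$ at step $t$ is at most $f_t(x_t)+2\,\mathrm{dist}(x_{t-1},x_t)$, which is at most twice the regular cost paid by $A$ at step $t$. Summing over $t$ gives $\mathrm{Cost}_{\mathrm{blind}}(A') \leq 2\,\mathrm{Cost}_{\mathrm{reg}}(A) \leq 2C\cdot\mathrm{OPT}_{\mathrm{reg}}$. Since the offline optimum is the same for blind and non-blind chasing (the offline player sees everything in advance and can prearrange its configuration), the right-hand side equals $2C\cdot\mathrm{OPT}_{\mathrm{blind}}$, giving the claimed $2C$-competitiveness.

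The only nontrivial issue is that the lemma is stated for adaptive adversaries, so one must check that the reduction is faithful to the adversary model. The main point is that the positions revealed to the adversary across steps, and the points in time at which they are revealed, match exactly between the simulated run of $A$ and the actual run of $A'$: by construction, after step $t$, both algorithms are located at $x_t=y_t$, so any adaptive strategy against $A'$ in the blind game is also a valid adaptive strategy against $A$ in the regular game with the same resulting costs up to the per-step factor of two derived above. I expect this bookkeeping, rather than the Lipschitz estimate, to be the only subtle part of the argument.
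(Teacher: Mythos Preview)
Your proposal is correct and follows essentially the same approach as the paper: apply the $1$-Lipschitz bound $f_t(x_{t-1})\le f_t(x_t)+\mathrm{dist}(x_{t-1},x_t)$ to conclude that the blind cost is at most the regular service cost plus twice the movement cost, hence at most twice the regular total cost. The paper's proof is a one-line version of exactly this computation; your additional remarks on the adaptive adversary and on $\mathrm{OPT}$ being unchanged are correct elaborations that the paper leaves implicit.
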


\begin{proof}

By definition the blind service cost of ALG is at most the original service cost plus the movement cost, as $f_t(x_{t})\leq f_t(x_{t+1})+ \mathrm{dist}(x_{t+1},x_{t})$ for any server position $x_t$ of ALG at time $t$. 

\end{proof}

\begin{lemma} \label{lem:blind2}

MTM chasing of $(\alpha,\beta)$ well-sharpened functions with competitive ratio $C$ can be reduced to MTM chasing of $(\alpha,\beta)$ well-sharpened functions with competitive ratio $C$ and the additional restriction that $f_t(x_{t,j})\leq dist(z_t,x_{t,j})$ for all ALG servers $x_{t,j}$.

\end{lemma}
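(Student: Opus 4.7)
The plan is to build a $C$-competitive MTM algorithm $A$ for the unrestricted problem out of a $C$-competitive restricted MTM algorithm $A_R$, by internally feeding $A_R$ a sequence of scaled-down ``virtual'' copies of each arriving request that individually satisfy the restriction. Throughout, $A$ will keep its configuration synchronized with that of $A_R$, so at every real time step the two algorithms share the same $k$ servers.

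When request $f_t$ with minimum $z_t$ arrives in the shared state $(x_{t,1},\ldots,x_{t,k})$, compute
\[
\lambda_t^{\star} \;=\; \min_{j\,:\,x_{t,j}\neq z_t}\; \frac{\mathrm{dist}(z_t,x_{t,j})}{f_t(x_{t,j})},
\]
treating servers already at $z_t$ as contributing $+\infty$. If $\lambda_t^{\star}\geq 1$ the restriction already holds and $A$ just forwards $f_t$ to $A_R$. Otherwise set $N_t=\lceil 1/\lambda_t^{\star}\rceil$, $\lambda_t=1/N_t\leq\lambda_t^{\star}$, and form the virtual request $\tilde f_t=\lambda_t f_t$. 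By construction $\tilde f_t(x_{t,j})\leq\mathrm{dist}(z_t,x_{t,j})$ for every $j$; and $\tilde f_t$ inherits $(\alpha,\beta)$ well-sharpening and the same minimizer $z_t$ from $f_t$ (positive scaling preserves the definition). Algorithm $A$ now presents $\tilde f_t$ to $A_R$ up to $N_t$ times in succession, mirroring the response after each virtual copy; as soon as $A_R$ moves a server to $z_t$ at some virtual step $m\leq N_t$, $A$ mirrors that move and terminates the block, otherwise $A$ stays put through all $N_t$ copies.

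The key invariant is that the restriction keeps holding for every virtual copy in the block. Well-sharpening forces $f_t(z_t)=0$ (otherwise the defining implication with $x=z$ would require $\beta f(y)/\mathrm{dist}(y,z)\geq f(z)/0=+\infty$), so once $A_R$ moves, the moved server trivially satisfies $\tilde f_t(z_t)=0\leq 0$, while the other servers have not moved and continue to satisfy the restriction by the choice of $\lambda_t$. The cost bookkeeping is then immediate per block: if $A_R$ never moves, both pay exactly $N_t\lambda_t\min_j f_t(x_{t,j})=\min_j f_t(x_{t,j})$ in service; if $A_R$ moves at virtual step $m$, then $A_R$ pays at least the movement $\mathrm{dist}(x_{t,j'},z_t)$, which is exactly what $A$ pays (the remaining service $f_t(z_t)=0$ is free). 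In the other direction, any offline strategy for the original sequence lifts to a virtual offline strategy of the same cost by keeping each server in place throughout every block, since the virtual service in block $t$ telescopes to $N_t\lambda_t f_t(y_t)=f_t(y_t)$; hence $\mathrm{OPT}_{\text{virtual}}\leq \mathrm{OPT}_{\text{original}}$. Combining with the $C$-competitiveness of $A_R$ on the restricted virtual sequence yields
\[
\mathrm{cost}(A)\;\leq\;\mathrm{cost}(A_R)\;\leq\;C\cdot\mathrm{OPT}_{\text{virtual}}\;\leq\;C\cdot\mathrm{OPT}_{\text{original}},
\]
as required. The main subtlety is verifying that the restriction persists through an entire virtual block after $A_R$ has moved a server to $z_t$, which is precisely where well-sharpening is used --- it pins $f_t(z_t)=0$ so that the moved server can absorb any number of remaining virtual copies for free.
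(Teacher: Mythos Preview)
Your argument is correct and follows the same idea as the paper's proof: replace each request $f_t$ by sufficiently many copies of a scaled-down version $f_t/M_t$ so that the restriction $f_t(x_{t,j})/M_t\le \mathrm{dist}(z_t,x_{t,j})$ holds, and use the MTM property (together with $f_t(z_t)=0$) to guarantee that a single choice of $M_t$ works for the entire block. The paper's proof is simply the one-line version of your construction; your extra bookkeeping (explicitly computing $N_t$, verifying $f_t(z_t)=0$ from the well-sharpened definition, and the early-termination step) is sound but not essential---in particular, early termination is harmless but unnecessary, since once $A_R$ has a server at $z_t$ the remaining virtual copies cost it nothing anyway.
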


\begin{proof}

Simply replace the request $f_t$ with $M_t$ requests $\frac{f_t}{M_t}$, where $M_t$ is sufficiently large. The MTM restriction means we can choose $M_t$ once for each $f_t$ (otherwise we would have to adjust based on the movements of ALG).

\end{proof}

We will now show how to reduce blind chasing of well-sharpened requests satisfying $f_t(x_{t,j})\leq dist(z_t,x_{t,j})$ (which is at least difficult as ordinary chasing of arbitrary functions thanks to Lemma~\ref{lem:blind2} and the fact that blindness only makes things more difficult) to blind chasing of $k$-median objectives $c_t|x-z_t|$ with $c_t\leq 1$ (which we can already do thanks to Theorem \ref{thm:kmedian} combined again with Lemma \ref{lem:blind1}). Importantly this reduction is inherently adaptive, so it requires an algorithm for online $k$-median which is competitive with an adaptive (online or offline) adversary, and in particular we do not obtain polylogarithmic type competitive ratio for well-sharpened functions.

\begin{lemma}

\label{lem:wellcentered}

Suppose $A$ is an online MTM algorithm for blind online $k$-median with $1$-Lipschitz costs which is $4\alpha$ competitive against the adaptive offline (resp. online) adversary. Define an algorithm $\tilde A$ for blind chasing of $(10\alpha,\beta)$-well-sharpened functions satisfying $f_t(x_{t,j})\leq dist(z_t,x_{t,j})$ as follows. If the $t$-th request $f_t$ is centered at $z_t$, and the closest online server to $z_t$ is $\hat x_{t}$, then pass the function request $\tilde f_t=\tilde f_t^{\hat x_{t}}(\cdot)$ to $A$ and move accordingly. Then $\tilde A$ is MTM and $O(\alpha\beta)$ competitive against the adaptive offline (resp. online) adversary.

\end{lemma}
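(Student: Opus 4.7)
The plan is to compare $\tilde{A}$'s cost on the original well-sharpened instance against $A$'s cost on the induced $k$-median instance, exploiting that both algorithms maintain identical server positions $\{x_{t,j}\}$ throughout by construction. First, I would check that $\tilde{A}$ inherits MTM from $A$: since $A$ only moves a server to the minimum $z_t$ of $\tilde{f}_t$, which is also the minimum of the well-sharpened $f_t$ (the well-sharpened condition forces $f_t(z_t)=0$), $\tilde{A}$ is MTM. Next, I would establish
\[
\mathrm{cost}(\tilde{A},f) \leq \mathrm{cost}(A,\tilde{f}).
\]
Movements agree identically, and at each time $t$
\[
\min_j f_t(x_{t,j}) \leq f_t(\hat{x}_t) = \tilde{f}_t(\hat{x}_t) = \min_j \tilde{f}_t(x_{t,j}),
\]
where the first equality is the defining normalization of $\tilde{f}_t^{\hat{x}_t}$ and the second uses that $\hat{x}_t$ is the closest server to $z_t$ and $\tilde{f}_t \propto \mathrm{dist}(\cdot,z_t)$. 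Combined with the $4\alpha$-competitiveness of $A$, this gives $\mathrm{cost}(\tilde{A},f) \leq 4\alpha \cdot \mathrm{cost}(OPT_{\mathrm{kmed}}, \tilde{f})$.

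The heart of the argument is to upper bound $\mathrm{cost}(OPT_{\mathrm{kmed}}, \tilde{f})$ by feeding the original OPT's trajectory $\{y_{t,j}\}$ into the $k$-median instance as a feasible (possibly suboptimal) solution. Movement is preserved; for service, let $y^*_t$ attain $\min_j f_t(y_{t,j})$. Applying the $(10\alpha,\beta)$-well-sharpened hypothesis via~(\ref{eq:forlater}) yields a two-case dichotomy: if $10\alpha \cdot \mathrm{dist}(y^*_t,z_t) \geq \mathrm{dist}(\hat{x}_t,z_t)$, the well-sharpened inequality directly gives $\tilde{f}_t(y^*_t) \leq \beta f_t(y^*_t)$; otherwise, by proportionality of $\tilde{f}_t$,
\[
\tilde{f}_t(y^*_t) = c_t \mathrm{dist}(y^*_t,z_t) < c_t \mathrm{dist}(\hat{x}_t,z_t)/(10\alpha) = \tilde{f}_t(\hat{x}_t)/(10\alpha).
\]
Summing over $t$ and using $\sum_t \tilde{f}_t(\hat{x}_t) = \mathrm{service}(A,\tilde{f})$ produces
\[
\mathrm{service}(OPT_{\mathrm{kmed}}, \tilde{f}) \leq \beta \cdot \mathrm{service}(OPT,f) + \tfrac{1}{10\alpha}\mathrm{service}(A,\tilde{f}).
\]

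Chaining these bounds together yields
\[
\mathrm{cost}(\tilde{A},f) \leq 4\alpha \cdot \mathrm{movement}(OPT) + 4\alpha\beta \cdot \mathrm{service}(OPT,f) + \tfrac{2}{5}\mathrm{service}(A,\tilde{f}),
\]
and since $\mathrm{service}(\tilde{A},f) \leq \mathrm{service}(A,\tilde{f})$ the last term can be absorbed into the left-hand side, giving $\mathrm{cost}(\tilde{A},f) = O(\alpha\beta) \cdot \mathrm{cost}(OPT)$ in the intended regime $\beta \geq 1$ (which covers the $k$-means application). The argument transfers unchanged from adaptive offline to adaptive online adversaries because the reduction consults only past information. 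The main obstacle I anticipate is calibrating the sharpness multiplier correctly: the first coordinate of the well-sharpened class must be a sufficiently large multiple of $\alpha$ (here $10\alpha$) so that case two contributes coefficient $1/(10\alpha) < 1/(4\alpha)$, which is exactly what enables the final absorption step rather than a blow-up.
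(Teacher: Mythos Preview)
Your approach is essentially the paper's own: both hinge on the same dichotomy induced by~\eqref{eq:forlater}, splitting timesteps according to whether the best OPT server is within a factor $10\alpha$ of $\hat x_t$ in distance to $z_t$. The paper packages this as a set $B$ of ``bad times'' and lower-bounds $\mathrm{cost}(OPT,f)$ by $\frac{1}{2\beta}\mathrm{cost}(OPT,\tilde f)$, whereas you upper-bound $\mathrm{cost}(OPT_{\mathrm{kmed}},\tilde f)$ by $\beta\,\mathrm{cost}(OPT,f)$ plus an absorbable error. These are two sides of the same coin.

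There is, however, a slip in your absorption step. You write
\[
\mathrm{cost}(\tilde A,f)\;\le\; 4\alpha\,\mathrm{movement}(OPT)+4\alpha\beta\,\mathrm{service}(OPT,f)+\tfrac{2}{5}\,\mathrm{service}(A,\tilde f)
\]
and then say that $\mathrm{service}(\tilde A,f)\le \mathrm{service}(A,\tilde f)$ lets you absorb the last term into the left side. That inequality points the wrong way: to absorb you would need $\mathrm{service}(A,\tilde f)\le \mathrm{cost}(\tilde A,f)$, which you have not established (and in general need not hold, since for $\beta>1$ well-sharpened functions are not monotone in distance to $z_t$, so $\min_j f_t(x_{t,j})$ may be strictly smaller than $f_t(\hat x_t)=\tilde f_t(\hat x_t)$). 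The fix is immediate: carry out the chain with $\mathrm{cost}(A,\tilde f)$ on the left instead, use $\mathrm{service}(A,\tilde f)\le \mathrm{cost}(A,\tilde f)$ to absorb, and only at the end invoke $\mathrm{cost}(\tilde A,f)\le \mathrm{cost}(A,\tilde f)$. With that correction your argument goes through and matches the paper's.
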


\begin{proof}

Let $I_F$ be the sequence of true cost functions and $\tilde I_F$ the sequence of $k$-median functions. We denote by $Cost_A(f_t)$ the service cost from $f_t$, etc. We give the proof for the adaptive offline adversary; the adaptive online version is verbatim identical 
except for adding expectations around the cost of OPT.

The idea of the proof is that by Theorem~\ref{thm:kmedian}, only a small fraction of the cost of OPT against the sequence $\tilde I_F$ comes from service cost close to the center points $z_t$. When OPT does not have a server close to $z_t$, we can lower bound its service cost by the service cost from $\tilde f_t^{x_{t,i}}$. More precisely given \emph{any} OPT path, we define the set of \emph{bad times} $B\subseteq [T]$ via:

\[B=\bigg\{t\in [T]\bigg| Cost_{OPT}(\tilde f_t)\leq \frac{1}{10\alpha} Cost_{A}(\tilde f_t)\bigg\}.\]

We have:

\[\sum_{b\in B} Cost_{OPT}(\tilde f_b) \leq \frac{1}{10\alpha} Cost_A(\tilde I_F) \leq \frac{Cost_{OPT}(\tilde I_F)}{2}. \]

On the set $B^c=[T]\backslash B$, letting $y_{t}$ be the minimum cost OPT server at time $t$, we have that \eqref{eq:forlater} implies that  $f_t(y_t) \geq \beta^{-1}\tilde f_t(y_t)$. In particular we obtain

\begin{align*}Cost_{OPT}(I_F)&\geq \left(\sum_{t\in B^c} Cost_{OPT}(f_t)\right)+MovementCost(OPT) \\
&\geq \frac{1}{\beta}\left(\left(\sum_{t\in B^c} Cost_{OPT}(\tilde f_t)\right)+MovementCost(OPT\right)\\
&= \frac{1}{\beta}\left(Cost_{OPT}(\tilde I_F)-\sum_{b\in B} Cost_{OPT}(\tilde f_b)\right)\\
&\geq \frac{Cost_{OPT}(\tilde I_F)}{2\beta}.
\end{align*}

This means we have

\[Cost_{ALG}(I_F)=Cost_{ALG}(\tilde I_F) \leq O(\alpha)\cdot Cost_{OPT}(\tilde I_F) \leq O(\alpha\beta)\cdot Cost(OPT)(I_F)\]

 as desired.

\end{proof}

Let us call the result of applying Lemma~\ref{lem:wellcentered} to the randomized work-function algorithm constructed previously the \emph{well-sharpened randomized work-function algorithm} or WSRWFA. Then we obtain:

\begin{cor}

Cost functions of the form $f_t(x)=\mathrm{dist}(x,z_t)^{\gamma_t}$ with $\gamma_t\in [1,\Gamma]$ can be chased with competitive ratio
$O(k)^{\Gamma}$ against the adaptive online adversary in any metric space via the WSRWFA. Moreover WSRWFA has adaptive offline competitive ratio $O(k\cdot \mathrm{polylog}(k))^{\Gamma}$ for the same family of functions.

\end{cor}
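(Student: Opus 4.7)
The plan is simply to invoke the reductions developed in this section, chained in order, with the randomized work-function algorithm for $k$-server at the bottom. First, by the observation immediately following the definition of well-sharpened, each request $f_t(x) = \mathrm{dist}(x,z_t)^{\gamma_t}$ is $(\alpha, \alpha^{\gamma_t - 1})$-well-sharpened in any metric space for every $\alpha > 1$. Since well-sharpening is monotone in $\beta$ and $\gamma_t \leq \Gamma$, the entire request sequence is simultaneously $(10\alpha, (10\alpha)^{\Gamma - 1})$-well-sharpened, so we may set $\beta := (10\alpha)^{\Gamma - 1}$ once and apply Lemma \ref{lem:wellcentered} uniformly over the sequence.

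Second, I would build the MTM blind $k$-median algorithm $A$ required by Lemma \ref{lem:wellcentered}. Applying Theorem \ref{thm:kmedian} to the deterministic work-function algorithm of \cite{KP95} yields a randomized MTM $k$-median algorithm that is $O(k)$-competitive against the adaptive online adversary and, via the \cite{BBKTW90} reduction noted in the discussion following Theorem \ref{thm:kmedian}, $O(k \cdot \mathrm{polylog}(k))$-competitive against the adaptive offline adversary. Lemma \ref{lem:blind1} upgrades each of these (at a factor of $2$) to blind $k$-median guarantees with the same asymptotic ratios, since the $k$-median costs $c_t \mathrm{dist}(\cdot, z_t)$ with $c_t \leq 1$ are $1$-Lipschitz. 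Feeding this blind $k$-median algorithm into Lemma \ref{lem:wellcentered} with $4\alpha$ equal to the blind $k$-median ratio produces a blind MTM chasing algorithm for the restricted $(10\alpha, \beta)$-well-sharpened problem with competitive ratio $O(\alpha \beta)$ against the corresponding adversary.

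Third, I would unwind the restrictions. Lemma \ref{lem:blind2} removes the constraint $f_t(x_{t,j}) \leq \mathrm{dist}(z_t, x_{t,j})$ at no asymptotic cost, since the MTM property is preserved upstream and thus the multiplicative factors $M_t$ can be fixed in advance for each $f_t$; passing from the blind to the ordinary chasing problem is then free, since an optimal offline solution has the same cost in both and the algorithm's cost only decreases when it is allowed to move before paying. Substituting $\alpha = O(k)$ and $\beta = O(k)^{\Gamma - 1}$ in the adaptive online case gives final ratio $O(\alpha \beta) = O(k)^\Gamma$; substituting $\alpha = O(k \cdot \mathrm{polylog}(k))$ in the adaptive offline case gives $O(k \cdot \mathrm{polylog}(k))^\Gamma$, exactly as stated.

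The main obstacle, such as it is, lies in the exponential dependence of $\beta$ on $\Gamma$: because Lemma \ref{lem:wellcentered} requires well-sharpening to hold uniformly over the whole request sequence, we are forced to use the worst-case exponent $\gamma_t = \Gamma$, and this is precisely the step that propagates into the $\Gamma$-power in the final competitive ratio. Beyond this, the work is bookkeeping: tracking that each reduction preserves the MTM structure needed downstream by Lemma \ref{lem:blind2}, and verifying that the adversary class (adaptive online vs.\ adaptive offline) is preserved all the way from the underlying $k$-server guarantee up through the final well-sharpened statement.
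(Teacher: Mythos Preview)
Your proposal is correct and follows exactly the approach the paper intends: the corollary is stated without proof precisely because it is meant to be read off from chaining Theorem~\ref{thm:kmedian}, Lemma~\ref{lem:blind1}, Lemma~\ref{lem:blind2}, and Lemma~\ref{lem:wellcentered} together with the observation that $\mathrm{dist}(x,z_t)^{\gamma_t}$ is $(\alpha,\alpha^{\gamma_t-1})$-well-sharpened for every $\alpha>1$. Your bookkeeping of the MTM property, the blind/ordinary passage, and the adversary class through each reduction is accurate, and your identification of the step where the $\Gamma$-power enters (via $\beta=(10\alpha)^{\Gamma-1}$ in Lemma~\ref{lem:wellcentered}) is exactly right.
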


\begin{cor}

$\kappa$ well-conditioned cost functions can be chased with competitive ratio 
$O(\kappa\cdot k^2)$ against the adaptive online adversary in any normed space via the WSRWFA. Moreover WSRWFA has adaptive offline competitive ratio $O(\kappa\cdot k^2\cdot \mathrm{polylog}(k))$ for the same family of functions.

\end{cor}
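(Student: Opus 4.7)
The plan is to plug the $\kappa$-well-conditioned case into the WSRWFA framework already assembled, and then carry out a short parameter calculation. First I would instantiate the base $k$-median algorithm: starting from the deterministic $(2k-1)$-competitive work-function algorithm for $k$-server of~\cite{KP95} and applying Theorem~\ref{thm:kmedian}, I get a randomized online $k$-median algorithm $A$ that is $4(2k-1)$-competitive against the adaptive online adversary. Writing this ratio as $4\alpha_0$ with $\alpha_0 = 2k-1 = O(k)$, we have exactly the hypothesis needed by Lemma~\ref{lem:wellcentered}.

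Next I would recall the fact recorded in Section~\ref{sec:wellsharp} that in any normed space every $\kappa$-well-conditioned convex cost function centered at its minimizer is $(a,\kappa a)$-well-sharpened for every $a > 1$. Choosing $a = 10\alpha_0$, each request is $(10\alpha_0, 10\kappa\alpha_0)$-well-sharpened. Lemma~\ref{lem:wellcentered} then directly yields the MTM algorithm $\tilde{A} = $ WSRWFA for blind chasing of these requests (the preliminary reductions of Lemmas~\ref{lem:blind1} and~\ref{lem:blind2} ensure we may assume $f_t(x_{t,j}) \leq \mathrm{dist}(z_t,x_{t,j})$ and pass between blind and non-blind chasing at the cost of a constant factor), with adaptive online competitive ratio $O(\alpha_0) \cdot (10\kappa\alpha_0) = O(\kappa k^2)$, giving the first half of the corollary.

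For the adaptive offline statement, I would invoke the main theorem of~\cite{BBKTW90}, which bounds the adaptive offline competitive ratio of a randomized algorithm by its adaptive online ratio times the best oblivious competitive ratio of any algorithm for the same problem. Combined with the $\mathrm{polylog}(k)$ oblivious $k$-server ratio alluded to in the text (following the approach of~\cite{Lee18}), this upgrades the base $k$-median algorithm to adaptive offline ratio $O(k \cdot \mathrm{polylog}(k))$. Rerunning the same arithmetic with $\alpha_0 = O(k \cdot \mathrm{polylog}(k))$ gives WSRWFA adaptive offline ratio $O(\kappa k^2 \cdot \mathrm{polylog}(k))$, as claimed.

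The main obstacle, and really the only subtlety worth flagging, is that the value $\beta = 10\kappa\alpha_0$ used here may exceed $1$ (in particular it grows with $k$), which formally falls outside the $\beta < 1$ convention in the definition of well-sharpened. However, an inspection of the proof of Lemma~\ref{lem:wellcentered} shows that the only use of $\beta$ is through the inequality $f(y) \geq \beta^{-1}\tilde f(y)$ on the ``good'' set $B^c$, and this is sensible for any $\beta > 0$; correspondingly the final bound $O(\alpha\beta)$ on the competitive ratio continues to hold without modification. Once this is noted, all pieces are already in place and no further computation is required.
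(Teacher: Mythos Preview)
Your proposal is correct and follows exactly the route implicit in the paper: the corollary is stated there without proof, as a direct consequence of Theorem~\ref{thm:kmedian}, Lemma~\ref{lem:wellcentered}, the fact that $\kappa$-well-conditioned functions are $(\alpha,\kappa\alpha)$-well-sharpened, and (for the adaptive offline bound) the \cite{BBKTW90} composition. Your observation that the resulting $\beta=10\kappa\alpha_0>1$ formally violates the convention $\beta<1$ in the definition but is immaterial to the proof of Lemma~\ref{lem:wellcentered} is a genuine point the paper glosses over, and your resolution is the right one.
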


We emphasize that Lemma~\ref{lem:wellcentered} only applies for competitive ratios against adaptive adversaries and in particular does not give a polylogarithmic algorithm against the oblivious adversary. Establishing such a result for the oblivious adversary, say in the case of $k$-means, would be very interesting. On the other hand, using a deterministic algorithm for online $k$-median in Lemma~\ref{lem:wellcentered} would result in a deterministic algorithm for well-sharpened functions. 
Also, Theorem~\ref{thm:kmedian} and Lemma~\ref{lem:wellcentered}
do not explicitly construct deterministic algorithms but instead establish a competitive ratio against the adaptive offline adversary, which
is equal to the deterministic competitive ratio (see~\cite{BBKTW90}).

\section{Regret analysis}

In this section we consider the online learning version of the problem. Namely, at each time $t\in [T]$ the decision maker chooses $k$ points $x_{t,1},\dots,x_{t,k}$ in the unit Euclidean ball $B$. Then a $1$-Lipschitz convex function $f_t$ on the Euclidean unit ball is revealed, and the associated cost is the ``top-$k$'' value, that is $\min_{1 \leq j \leq k} f_t(x_{t,j})$. We are interested in controlling the regret defined as:
\[
\sum_{t=1}^T \min_{1 \leq j \leq k} f_t(x_{t,j}) - \min_{x^1, \hdots, x^k \in B} \sum_{t=1}^T \min_{1 \leq j \leq k} f_t(x^j) \,.
\] 
For $k=1$, it is well-known that one can obtain a computationally efficient and deterministic algorithm with regret $O(\sqrt{T})$. On the other hand for $k>1$ it is straightforward to obtain the following guarantee:

\begin{prop} \label{prop:naive}

There exists a computationally inefficient randomized algorithm for top-$k$ action OCO with regret $O(\sqrt{k d T \log(T)})$. In fact, we only require that the loss functions be Lipschitz, and could use a different loss for each action (where the choice of $k$ actions is viewed as an ordered list).

\end{prop}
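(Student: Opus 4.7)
The plan is to run an exponential-weights/Hedge expert algorithm over a discretization of the product space $B^k$, treating each ordered $k$-tuple of grid points as a single expert whose loss at time $t$ is the top-$k$ value. This reduces top-$k$ OCO to a standard prediction-with-expert-advice problem, paying only for the discretization error in the comparator.

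First, I would fix a resolution $\eps>0$ and take an $\eps$-net $N_\eps\subseteq B$ of the Euclidean unit ball with $|N_\eps|\le (3/\eps)^d$ by a standard volume argument. Let $\mathcal{E}=N_\eps^k$ be the set of ordered $k$-tuples drawn from the net, so $|\mathcal{E}|\le (3/\eps)^{dk}$. Each element $e=(y^1,\dots,y^k)\in\mathcal{E}$ plays the role of an expert; at round $t$, after the losses $f_t^{(1)},\dots,f_t^{(k)}$ are revealed (possibly different per coordinate), the loss attributed to expert $e$ is $\ell_t(e)=\min_{1\le j\le k} f_t^{(j)}(y^j)$. Since each $f_t^{(j)}$ is $1$-Lipschitz on $B$, by shifting by $f_t^{(j)}(0)$ (which does not affect regret) we can assume losses lie in a bounded range of size $O(1)$.

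Second, I would run the Hedge algorithm on $\mathcal{E}$: maintain weights $w_t(e)$ updated multiplicatively by $e^{-\eta\ell_t(e)}$, and at each round sample $e_t\sim w_t/\|w_t\|_1$ and play its $k$ coordinates. The standard analysis gives, for an appropriate $\eta\asymp \sqrt{\log|\mathcal{E}|/T}$,
\begin{equation*}
\mathbb{E}\sum_{t=1}^T \ell_t(e_t)\;-\;\min_{e\in\mathcal{E}}\sum_{t=1}^T \ell_t(e)\;=\;O\!\left(\sqrt{T\log|\mathcal{E}|}\right)\;=\;O\!\left(\sqrt{dkT\log(1/\eps)}\right).
\end{equation*}

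Third, I would compare the best expert in $\mathcal{E}$ to the best continuous $k$-tuple. Given any $(x^1,\dots,x^k)\in B^k$, choose $\tilde x^j\in N_\eps$ with $\|\tilde x^j-x^j\|\le \eps$. By $1$-Lipschitzness, $f_t^{(j)}(\tilde x^j)\le f_t^{(j)}(x^j)+\eps$ for every $j$, so $\min_j f_t^{(j)}(\tilde x^j)\le \min_j f_t^{(j)}(x^j)+\eps$. Summing over $t$ gives a comparator gap of at most $\eps T$. Combining the two bounds and choosing $\eps=1/\sqrt T$ yields total expected regret $O(\sqrt{dkT\log T})$, which (since negative regret is harmless) is the desired bound.

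There is no real obstacle: the argument is entirely routine, and the only thing to flag is the cost of the construction. The algorithm maintains weights over $|\mathcal{E}|=(3\sqrt T)^{dk}=T^{\Theta(dk)}$ experts, which is the source of the claimed computational inefficiency. Notably the proof never uses convexity of $f_t$ or that the losses across the $k$ coordinates coincide, which is exactly the generality asserted in the proposition.
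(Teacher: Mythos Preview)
Your proof is correct and follows essentially the same approach as the paper: discretize the ball, treat ordered $k$-tuples of net points as experts, and run multiplicative weights, absorbing the discretization error via Lipschitzness. The paper's argument is the same but terser (it takes $\eps=1/T$ rather than $1/\sqrt{T}$, which is immaterial for the final bound), and your version spells out the Lipschitz comparator step and the generality to per-coordinate losses that the paper only asserts.
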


\begin{proof}

We discretize the unit ball into $O(T)^d$ subsets of diameter at most $\frac{1}{T}$. Viewing the collection of $k$-tuples of subsets as $O(T)^{dk}$ distinct arms, we obtain the stated bound $O(\sqrt{T\log(O(T)^{dk})})=O(\sqrt{k d T \log(T)})$ using multiplicative weights.

\end{proof}

Compared to the case $k=1$, we note four differences: (i) computational inefficiency, (ii) randomization, (iii) dimension dependency, and (iv) extra $\sqrt{k}$ factor. In the following we prove that (i), (ii) and (iv) are actually unavoidable. Regarding (iii) we do not obtain a definitive answer, but we show that for the specific case of {\em linear} losses one can actually remove the dimension dependency. Finally we also prove that, again on the contrary to the case $k=1$, assuming strong convexity of the losses does not allow to reduce the regret below $\sqrt{T}$.

\subsection{Computational hardness}
We show here that for top-$k$ action online linear optimization (OLO) there does not exist a computationally efficient algorithm achieving sublinear regret. We give a reduction from top-$2$ action online \emph{linear} optimization to approximating the $L^2\to L^1$ norm of a matrix, which is known to be reducible to MAXCUT.

\begin{thm}

For some small constant $c$ it is NP-hard to obtain $cT/d$ regret (or even estimate the optimal expected per-timestep loss up to $c/d$ additive error) for stochastic online linear optimization using $k=2$ actions.

\end{thm}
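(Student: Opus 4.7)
The plan is to reduce from the NP-hardness of approximating the $L^2 \to L^1$ norm of a matrix, which itself reduces from \textsc{MaxCut}. I will derive a closed form for the optimal expected per-timestep loss of a suitably chosen stochastic instance in terms of this norm, and then convert a hypothetical low-regret efficient algorithm into a polynomial-time approximator of the norm.

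Concretely, given $A \in \RR^{m \times d}$ with rows $a_1, \ldots, a_m$ each of Euclidean norm at most one, let $\mathcal{D}$ be the distribution over linear loss vectors that samples uniformly from $\{\pm a_i : i \in [m]\}$. For any static pair $(x^1, x^2) \in B \times B$, the elementary identity $\min(u,v) + \min(-u,-v) = -|u-v|$ yields
\[
\E_{f \sim \mathcal D}\!\bigl[\min(\langle f, x^1\rangle,\, \langle f, x^2\rangle)\bigr] = -\frac{1}{2m}\sum_{i=1}^m \bigl|\langle a_i,\, x^1 - x^2\rangle\bigr|.
\]
Substituting $w = x^1 - x^2$ (which ranges over all of $2B$ as $(x^1, x^2)$ ranges over $B \times B$) and using positive homogeneity gives the closed form $\mathrm{OPT}(\mathcal D) = -\|A\|_{2 \to 1}/m$.

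Next, suppose an efficient randomized algorithm $\mathcal A$ achieves expected regret at most $cT/d$ on the i.i.d.\ sequence drawn from $\mathcal D$. Its expected per-round loss lies within $c/d$ of $\mathrm{OPT}(\mathcal D)$, and since each per-round loss lies in $[-1,1]$, Hoeffding concentration for $T = \poly(d, m, 1/c)$ ensures that the empirical average is a high-probability $o(1/d)$-accurate proxy for its expectation. Thus $\mathcal A$ yields a polynomial-time estimator of $\|A\|_{2 \to 1}/m$ to additive precision $c/d + o(1/d)$. To finish, I would exhibit a \textsc{MaxCut}-based family of matrices for which this precision is NP-hard to attain. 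Taking $A$ to be the unit-normalized signed edge-incidence matrix of an $n$-vertex graph (so $d = n$), the discrete witnesses $x \in \{\pm 1/\sqrt d\}^d \subset B$ show $\|A\|_{2 \to 1}/m = \Omega\!\bigl(\mathrm{MaxCut}/(m\sqrt d)\bigr)$, and Grothendieck-type rounding on the associated SDP gives a matching upper bound up to an absolute constant. Consequently H\aa stad's constant-factor \textsc{MaxCut} inapproximability translates into an $\Omega(1/\sqrt d)$ additive gap in $\|A\|_{2 \to 1}/m$. Since $c/d$ is smaller than $1/\sqrt d$ for every sufficiently large $d$ when $c$ is a small absolute constant, the estimator above resolves this gap and would solve \textsc{MaxCut} in polynomial time, contradicting $\mathrm{P} \neq \mathrm{NP}$.

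The delicate point will be verifying the Grothendieck-type upper bound, namely that the continuous maximum of $\|Ax\|_1$ over $\|x\|_2 \leq 1$ is within a constant factor of the discrete $\{\pm 1/\sqrt d\}^d$ witness value, so that the \textsc{MaxCut} hardness gap is faithfully inherited by the continuous norm. This should follow from standard Goemans--Williamson or Khot--Naor style rounding arguments, but the scalings through the normalization of $A$, the division by $m$, and the $c/d$ target precision must be tracked carefully; the online-to-batch and concentration steps, by contrast, are routine.
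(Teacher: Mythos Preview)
Your reduction of the optimal per-timestep loss to $-\|A\|_{2\to 1}/m$ is correct and is exactly what the paper does; the paper phrases it via the symmetry observation that an optimal pair may be taken as $(x,-x)$, which is equivalent to your identity $\min(u,v)+\min(-u,-v)=-|u-v|$. The online-to-batch and concentration steps are also fine and match the paper's (terser) treatment. The paper then simply \emph{cites} a known result that approximating $\|W\|_{2\to 1}$ within a fixed multiplicative constant is NP-hard, scales $W$ so that its largest row is a unit vector (forcing $\|W\|_{2\to 1}\geq 1$ and turning the multiplicative gap into an additive constant gap), and divides by the $m=d$ rows to obtain the $c/d$ bound.

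Where your proposal breaks is in attempting to re-derive that hardness via the edge-incidence matrix. The ``Grothendieck-type rounding'' you invoke does not exist: you need the maximum of $\sum_{(i,j)\in E}|x_i-x_j|$ over the Euclidean unit ball to lie within an absolute constant of its maximum over $\{\pm 1/\sqrt d\}^d$, but for a star on $d$ vertices the continuous optimum is $\Theta(d)$ (take $x$ to be the center indicator) while the discrete optimum is $\Theta(\sqrt d)$. Threshold rounding of an $\ell_2$-unit vector to $\{\pm 1\}$ coordinates inevitably loses a $\sqrt d$ factor here, and even a hypothetical constant-factor sandwich $c_1\,\mathrm{MaxCut}/(m\sqrt d)\leq \|A\|_{2\to 1}/m\leq c_2\,\mathrm{MaxCut}/(m\sqrt d)$ with $c_1<c_2$ would not transfer H\aa stad's gap. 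The correct route, which the paper alludes to, is the identity $\|A\|_{2\to 1}=\max_{y\in\{\pm1\}^m}\|A^Ty\|_2$, so that $\|A\|_{2\to 1}^2=\max_{y\in\{\pm1\}^m} y^TAA^Ty$; choosing $A$ with $AA^T$ a graph Laplacian makes this exactly $4\cdot\mathrm{MaxCut}$, with no rounding on the $x$-side at all, and the multiplicative inapproximability follows directly.
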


\begin{proof}

We consider stochastic requests which are linear functions $f(x)=x\cdot w$ where $w$ is sampled from a known symmetric distribution over $d$-dimensional unit vectors. Without loss of generality we may assume that only pairs $(x,-x)$ are chosen for the action pairs. Indeed, it is easy to see from the symmetry that switching from $(x_1,x_2)$ to either $(x_1,-x_1)$ or $(x_2,-x_2)$ each with probability $1/2$ always yields an expected improvement. As we have a known distribution, the problem hence amounts to maximizing $\mathbb E[|x\cdot w|]$. \\

Now, we suppose the distribution of $w$ is given in the form of a $d\times d$ matrix $W$ of equiprobable rows, where for each $1\leq i\leq d$ there is a $\frac{1}{2d}$ chance to receive cost $f(x)=x\cdot w_i$ and $\frac{1}{2d}$ chance to receive $f(x)=-x\cdot w_i$. Then $\mathbb E[f(x)]=\frac{-|Wx|_{L^1}}{d}$. Hence our problem reduces to approximating the $L^2\to L^1$ norm of $W$. According to section 1.3.2 of \cite{matrixnorm} it is NP hard to approximate this norm up to relative error $c$ for some small constant $c$. (Roughly, the matrix $WW^T$ maps from $L^{\infty}\to L^1$ and computing its norm subsumes MAXCUT.)\\

Now, any nonzero matrix can be scaled to have its largest row a Euclidean unit vector so without loss of generality we assume $W$ is of this form (which is necessary for the corresponding cost functions to be $1$-Lipschitz). Then we have $|W|_{L^2\to L^1}\geq 1$ and so we obtain NP hardness of approximating its norm within additive error $c$. Returning to the original problem, this means is NP hard to approximate the value of the optimal point within additive error $\frac{c}{d}.$ Hence assuming $P\neq NP$ it is impossible to obtain regret $cT/d$ in time $poly(d)$.

\end{proof}

\subsection{Necessity of randomization}

We show here that there is no top-$k$ action analog of online gradient descent, in the sense that no deterministic algorithm can achieve sublinear regret even in dimension $1$.  \\

\begin{thm}

For $1$-Lipschitz and convex losses on $[0,1]$, every deterministic algorithm achieves regret $\Theta(T)$ in the worst case. Moreover this holds even if we fix any strictly increasing, continuously differentiable convex function $f:\mathbb R^+\to\mathbb R^+$ and restrict all loss functions to take the form $\ell(x)=f_y(x):=f(|x-y|)$.

\end{thm}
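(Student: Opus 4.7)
The plan is to prove the lower bound for $k=2$; the case $k \ge 2$ reduces by pinning $k-2$ extra servers at widely-separated, heavily-requested locations, leaving an effective two-server problem on $[0,1]$.

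Against a given deterministic algorithm, the adversary restricts its plays to the three anchor points $Z = \{0, 1/2, 1\}$. The core structural observation is a pigeonhole argument: any $x \in [0,1]$ is within distance $1/4$ of at most one element of $Z$, so the algorithm's two points cover at most two of the three anchors, and there always exists some $z \in Z$ with $\min(|a_t - z|, |b_t - z|) \ge 1/4$. The \emph{greedy} adversary --- which at each round chooses $y_t \in \arg\max_{z \in Z} \min(|a_t - z|, |b_t - z|)$ --- therefore incurs per-round algorithm cost at least $f(1/4)$, totaling $T f(1/4)$.

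For the static benchmark, consider the three ``split'' pairs $P_z := Z \setminus \{z\}$. The pair $P_z$ pays loss $f(1/2)$ on every round where $y_t = z$ and $0$ otherwise; letting $n_z$ count those rounds, the total cost of $P_z$ is $n_z f(1/2)$. The adversary uses its freedom in breaking ties within the greedy rule to balance the triple $(n_0, n_{1/2}, n_1)$, ensuring $\min_z n_z \le T/3$. Hence the best split pair has cost at most $(T/3) f(1/2)$, giving regret at least $T\bigl(f(1/4) - f(1/2)/3\bigr)$. Whenever $3 f(1/4) > f(1/2)$ (this includes the linear case $f(x) = x$), this is immediately $\Omega(T)$.

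The main technical challenge is the strongly convex case (e.g., $f(x) = x^2$) where $3 f(1/4) \le f(1/2)$ and the split-pair argument fails because the ``Nash pair'' $(1/4, 3/4)$ itself achieves $T f(1/4)$. Here I modify the adversary's tie-breaking to depend on the algorithm's play: whenever the algorithm's configuration is $\epsilon$-close to $(1/4, 3/4)$, the adversary persistently plays $y_t = 0$ rather than balancing; otherwise it plays greedy. If the algorithm remains $\epsilon$-close to $(1/4, 3/4)$ in $\Omega(T)$ rounds, the sequence concentrates at $0$ and the static pair $\{0, 3/4\}$ attains near-zero total cost, giving regret $\Omega\bigl(T f(1/4)\bigr)$. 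Otherwise the algorithm deviates in $\Omega(T)$ rounds, and a quantitative sharpening of the pigeonhole observation --- using continuity and strict monotonicity of $f$ together with the fact that $(1/4, 3/4)$ is the \emph{unique} Chebyshev center of $Z$ --- forces per-round cost strictly larger than $f(1/4)$ by a constant $\delta(\epsilon) > 0$, again yielding $\Omega(T)$ extra regret. The main obstacle is making this dichotomy quantitative and uniform over algorithms: one needs a quantitative continuity argument near the Chebyshev center to convert ``$\epsilon$-deviation'' into a strict per-round loss gap $\delta(\epsilon)$, so that the two cases jointly certify $\Omega(T)$ regret for every choice of $f$.
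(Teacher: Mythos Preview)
Your first two steps (the pigeonhole lower bound on the algorithm's per-round cost, and the split-pair benchmark when $3f(1/4)>f(1/2)$) are fine for $k=2$, but two points need correction.

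\textbf{The reduction to $k=2$ is not valid here.} The action space is the interval $[0,1]$; there are no ``widely-separated, heavily-requested locations'' available to absorb $k-2$ actions, and you cannot enlarge the domain since the theorem fixes it. The paper handles general $k$ directly: by pigeonhole any $k$ points $s_1\le\dots\le s_k$ in $[0,1]$ satisfy $s_1\ge\frac{1}{2k}$, or $s_j-s_{j-1}\ge\frac{1}{k}$ for some $j$, or $s_k\le 1-\frac{1}{2k}$, so the adversary can always force loss $f(\frac{1}{2k})$. Your three-anchor argument is the $k=2$ instance of this and generalizes immediately to $k+1$ anchors, so the reduction is unnecessary anyway.

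\textbf{The dichotomy for the strongly convex case has a real gap.} Your ``far'' branch asserts that $(1/4,3/4)$ is the \emph{unique} two-point Chebyshev center of $\{0,1/2,1\}$, so that any $\epsilon$-deviation forces greedy cost $\ge f(1/4)+\delta(\epsilon)$. This is false: the optimal set is the L-shape $\{1/4\}\times[3/4,1]\ \cup\ [0,1/4]\times\{3/4\}$. If the algorithm sits at $(1/4,1)$ forever, it is $\epsilon$-far from $(1/4,3/4)$ for every $\epsilon<1/4$, yet greedy can only extract $f(1/4)$ (via $y\in\{0,1/2\}$). So your ``far $\Rightarrow$ strict excess cost'' implication fails, and the dichotomy does not cover this case. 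One can try to repair this by replacing ``close to $(1/4,3/4)$'' with ``close to the entire optimal set'' and designing an adversary that, in the close case, targets whichever endpoint of the L-shape the algorithm is near --- but then you no longer have a single benchmark pair that works uniformly, and the argument becomes another case split within a case split.

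The paper avoids all of this with a different and much cleaner idea on the benchmark side: it exhibits a \emph{distribution} over static $k$-configurations whose expected loss is at most $f(\frac{1}{2k})-\delta$ against \emph{every} function $f_y$, and then invokes averaging. Concretely, for $k=2$ it takes the evenly spaced pair $(1/4,3/4)$, applies a random $\pm\varepsilon$ shift (which strictly beats $f(1/4)$ for $y\in[\frac{1}{4},\frac{3}{4}]$ and loses only $o(\varepsilon)$ for $y$ near the boundary, using continuous differentiability of $f$), and then mixes in a tiny probability of a pair with endpoints at $(\frac{1}{8},\frac{7}{8})$ to strictly beat $f(1/4)$ near the boundary as well. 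This yields a single minimax bound valid for all $y$, with no case analysis on the shape of $f$ and no adaptive adversary.
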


We first remark that it is trivial to extend this lower bound to higher dimensions. Indeed, if we fix some line segment in $\mathbb R^d$ and only play functions of the form $f(|x-y|)$ for $y$ on the segment, any algorithm's performance is improved by projecting onto the line segment.

Our proof uses the following strategy: for any deterministic algorithm we design an adaptive adversary who achieves loss at least $f\left(\frac{1}{2k}\right)$ per time-step. On the other hand, we show using an averaging argument that for any adversarial sequence of functions $f_y(\cdot)$, there is an algorithm for the player which achieves smaller loss per time-step. Hence the adaptive adversary we designed forces $\Theta(T)$ regret on any deterministic algorithm.

\begin{proof}

We first claim that for any fixed choice of $k$ actions $0\leq s_1\leq s_2\leq \dots\leq s_k\leq 1$, an adaptive adversary can force loss at least $f\left(\frac{1}{2k}\right)$ using functions $f_y(\cdot)$. Indeed by the pigeonhole principle one of the following holds:

\begin{enumerate}
    \item $s_1\geq \frac{1}{2k}$.
    \item $s_j-s_{j-1}\geq \frac{1}{k}$ for some $j\leq \{2,3,\dots,k\}$.
    \item $s_k\leq 1-\frac{1}{2k}$.
\end{enumerate}

In the first case we take $\ell(x)=f_0(x)$, in the second we take $\ell(x)=f_{\frac{s_j+s_{j-1}}{2}}(x)$ and in the third we take $\ell(x)=f_1(x)$. This proves the first claim.

Next we claim that against an arbitrary sequence of adversarial plays, there exists a static top-$k$ action configuration with loss per time-step less than $f\left(\frac{1}{2k}\right)$. We accomplish this in two parts. We do this by exhibiting a fixed random configuration (i.e. independent of the adversary) which achieves this guarantee in expectation.

As first step, we consider a random, small $\varepsilon$-shift of the evenly spaced configuration $s_j=\frac{2j-1}{2k}$, i.e. the configuration $s_j=\frac{2j-1}{2k}+\eta\varepsilon$ for $\eta\in\{\pm 1\}$ uniformly random. This achieves average loss $f\left(\frac{1}{2k}\right)-\Omega(\varepsilon)$ for functions $f_y(\cdot)$ with $y$ in the interval $I_k=\left[\frac{1}{2k},1-\frac{1}{2k}\right]$. Indeed, the distance from such a $y$ to the closest action is always at most $\frac{1}{2k}$ and has probability at least $\frac{1}{2}$ to be at most $\frac{1}{2}-\frac{\varepsilon}{10}$. Moreover because $f$ is twice differentiable this random shift achieves average loss $f\left(\frac{1}{2k}\right)+h(\varepsilon)$ on $f_y(\cdot)$ for $y$ in the complement $I_k^c$, where $h(\varepsilon)=o(\varepsilon)$ is some function depending on $f$. Here the worst case is $y=0$ or $y=1$, and we use the fact the continuous differentiability of $f$ at $\frac{1}{2k}$ implies $f(\frac{1}{2k}\pm \varepsilon) = f(\frac{1}{2k})\pm f'(\frac{1}{2k})\varepsilon+o(\varepsilon)$.

We next observe that when $(s_1,s_k)=\left(\frac{1}{4k},1-\frac{1}{4k}\right)$ the loss is strictly less than $f\left(\frac{1}{2k}\right)$ on $I_k^c$, and bounded on all of $[0,1]$. Therefore, mixing in any such a configuration to the above random $\varepsilon$-shift with $\Theta(\sqrt{\varepsilon\cdot h(\varepsilon)})$ probability yields a distribution over $k$ actions achieving loss $f\left(\frac{1}{2k}\right)-\Omega(\sqrt{\varepsilon\cdot h(\varepsilon)})$ against any function $f_y(\cdot)$.

By averaging, this implies that against any adversary playing only functions $f_y(\cdot)$, there is (in hind-sight) a static $k$ action configuration achieving loss at most $f\left(\frac{1}{2k}\right)-\delta$ per time-step for $\delta=\delta(f,k)>0$. Applying this to the adversary obtained in the first part of this proof proves the theorem.

\end{proof}

\subsection{Lower Bound for Many Actions}

Now we return to consideration of randomized algorithms. Previously we showed 
$O(\sqrt{kdT\log(T)})$ regret is achievable (by an inefficient algorithm). Here we 
give a lower bound $\Omega\left(k^{-\frac{2}{d-1}}\cdot\sqrt{kT}\right)$ for any 
$k\geq 2$, which applies even for $k$-means. In particular this shows that 
logarithmic regret against strongly convex functions, another desirable property 
of online gradient descent, is unachievable for multi-action optimization.

\begin{thm} \label{thm:manyservers}

Consider instances where the loss functions are of the form of the $k$-means objective. Any randomized 
algorithm has expected regret $\Omega\left(k^{-\frac{2}{d-1}}\cdot\sqrt{kT}\right)$ when $k\geq 2,d\geq 6$. 
Moreover, this lower bound holds against stochastic adversaries with known distribution.

\end{thm}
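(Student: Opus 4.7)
The plan is to apply Yao's minimax principle by exhibiting a stochastic adversary whose distribution is parameterized by a hidden random subset $S \subset [N]$ of size $k$ out of $N = 2k$ target points. The goal is to construct an instance where identifying $S$ is information-theoretically expensive and any misidentification forces regret of order $\varepsilon^2$ per step. Specifically, I would place $N = 2k$ points $v_1, \ldots, v_N$ in the unit ball of $\RR^d$ forming a well-spread packing on a sphere of radius $O(1)$ with pairwise distances $\geq \varepsilon := c_1 k^{-1/(d-1)}$; such a regular packing exists for $d \geq 6$. The adversary draws $S$ uniformly from $\binom{[N]}{k}$, and at each time $t$ samples $y_t = v_j$ with probability $(1+\Delta)/N$ if $j \in S$ and $(1-\Delta)/N$ otherwise, presenting $f_t(x) = \|x - y_t\|^2$; here $\Delta \in (0,1)$ is a parameter to be optimized.

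The argument has three main steps. First, I would lower-bound the expected per-step regret against the hindsight choice $S$. Setting $\delta_j(U) := \min_{v \in U} \|v_j - v\|^2$, a direct computation gives
\[
\E_{y \sim \mu_S}[L_{T_t}(y) - L_S(y)] = \frac{1}{N}\sum_j(\delta_j(T_t) - \delta_j(S)) + \frac{\Delta}{N}\sum_j \sigma_j(S)(\delta_j(T_t) - \delta_j(S)),
\]
where $\sigma_j(S) \in \{\pm 1\}$ records membership in $S$. Using $\delta_j(U) \geq \varepsilon^2$ whenever $j \notin U$ together with the regularity of the packing (so that all relevant $\delta_j$'s are $\Theta(\varepsilon^2)$), this should yield per-step regret $\geq c_2 \Delta \varepsilon^2 \cdot \E[|S \setminus T_t|]/k$. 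Second, by an information-theoretic argument, for $S, S'$ differing in one swap one computes $\mathrm{KL}(\mu_S \| \mu_{S'}) = O(\Delta^2/k)$, so $I(y_{1:t}; S) = O(t \Delta^2)$ while $H(S) = \log \binom{2k}{k} = \Theta(k)$; a Fano-type inequality then forces $\E[|S \setminus T_t|] \geq c_3 k$ as long as $t \leq c_4 k/\Delta^2$. Third, combining these, for any $T \leq c_4 k/\Delta^2$ the total expected regret is at least $c_5 T \Delta \varepsilon^2$; choosing $\Delta = \sqrt{k/T}$ (which satisfies the constraint) yields $R \geq c_5 \varepsilon^2 \sqrt{kT} = \Omega(k^{-2/(d-1)} \sqrt{kT})$ as required.

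The main obstacle is Step 1, the per-step regret lower bound: the algorithm may place a server at some $v_j$ with $j \notin S$, thereby \emph{gaining} against OPT on the $(1-\Delta)/N$ mass concentrated near $v_j$. The net regret per mismatched point is therefore only $\Theta(\Delta \varepsilon^2)$ rather than $\Theta(\varepsilon^2)$, and making this cancellation precise requires a ``regular'' packing in which $\delta_j(U)$ has uniformly comparable size $\Theta(\varepsilon^2)$ across all $j \notin U$ and random $k$-subsets $U$. This uniform geometric control is where the dimension condition $d \geq 6$ enters: it affords enough room for a packing of $2k$ points at mutual distance $\varepsilon$ on a sphere of radius $O(1)$ in which every point has $\Omega(1)$ neighbors at distance $\Theta(\varepsilon)$, so that $\delta_j(U) = \Theta(\varepsilon^2)$ holds with high probability over uniform $k$-subsets $U$. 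A minor additional point is that the $k$-means loss $\|x - y\|^2$ is not $1$-Lipschitz on the unit ball, but this is fine since the theorem specifically concerns the $k$-means objective rather than the general Lipschitz setup of Proposition~\ref{prop:naive}.
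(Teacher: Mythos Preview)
Your approach is genuinely different from the paper's, and while plausible in spirit, Step~1 is not justified as written.

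The paper's construction is much simpler and avoids information theory entirely. It partitions the $k$ servers into $j=k/2$ independent three-point gadgets: within each of $j$ well-separated regions on the sphere it places three collinear points $p_A,p_B,p_C$ at spacing $1/100$, and the adversary plays uniformly on the $3j$ points. The separation forces any optimum to use exactly two servers per region, and within a region the only optima are $\{p_A,\tfrac{p_B+p_C}{2}\}$ and $\{p_C,\tfrac{p_A+p_B}{2}\}$, which have \emph{identical} expected cost. Random $\sqrt{T/j}$ fluctuations in the empirical counts make one of them better by $\Theta(\sqrt{T/k})$ per region, yielding $\Theta(\sqrt{kT})$ total; the factor $k^{-2/(d-1)}$ appears only when $k>2^d$ and the regions must shrink. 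The point is that once the optimal structure is pinned down, the lower bound is a direct anti-concentration argument on $j$ independent coins.

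Your biased-subset construction has to confront exactly the issue the paper's gadget sidesteps. In your per-step regret decomposition the ``unbiased'' term $\frac{1}{N}\sum_j(\delta_j(T_t)-\delta_j(S))$ can be \emph{negative}: the algorithm may place servers anywhere in $B$ (e.g.\ at centroids of pairs of packing points), and for a generic packing the uniform $k$-means optimum on $2k$ equal-mass points is \emph{not} attained by placing servers on a size-$k$ subset of them. You need this negative contribution to be dominated by the $\Delta$-term, which is where you invoke a ``regular'' packing. But your stated mechanism---every point has $\Omega(1)$ neighbors at distance $\Theta(\varepsilon)$, hence $\delta_j(U)=\Theta(\varepsilon^2)$ with high probability over random $k$-subsets $U$---does not deliver this: with only $O(1)$ neighbors, a uniformly random half misses all of them with constant probability per point, so for a constant fraction of indices $\delta_j(U)\gg\varepsilon^2$. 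You would need $\Omega(\log k)$ neighbors at scale $\varepsilon$ (incompatible with a packing at that scale) or some other device to get the uniform two-sided control you claim, and you still have not addressed non-lattice placements of $T_t$. The paper's triples avoid all of this because the local optimal structure is verified by hand and all local optima have exactly equal expected cost, so only the fluctuation term survives.
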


The idea is to play stochastically from a clustering problem with many distinct optimal clusterings. One of these optima will gain $\Omega(\sqrt{kT})$ in performance due to random fluctuation, and no algorithm can leverage this gain. The simplest example is as follows: suppose $k-2$ and $d=1$, and the adversary plays the loss functions $f_y(x)=|x-y|^2$ where $y\in\{0,\frac{1}{2},1\}$ is uniformly random. Then two optimal (in expectation) configurations for the player are $(s_1,s_2)=(0,\frac{3}{4})$ and $(s_1,s_2)=(\frac{1}{4},1).$ However, the $\sqrt{T}$-size random fluctations in frequency mean that regret $\Omega(\sqrt{T})$ is unavoidable. 

\begin{proof}

For a general even $k=2j$ we extend the construction just outlined as follows. We first suppose that $k\leq 2^d$ for convenience. Create $j$ distinct regions $R_1,\dots, R_j$ containing a radius $\frac{1}{10}$ ball on the unit sphere separated by distance at least $\frac{1}{10}$ each. Within each region, form an equally spaced triple of points $p^i_A,p^j_B,p^j_C$ in a row with spacing distance $1/100$. The adversary will play randomly on these $3j$ points.\\

First we claim that all optimal clusterings are given by choosing within each region $R_i$ two points:  $p^i_A,\frac{p^j_B+p^j_C}{2}$ or $p^i_C,\frac{p^j_B+p^j_A}{2}$. To see this, note that in one of these clusterings, the total cost in a cluster with $2$ points is $\frac{1}{2\cdot 10^4}$. On the other hand clusters containing $\{p^i_A,p^i_C\}$ and possibly $p^i_B$ have total cost at least $\frac{2}{10^4}$. Moreover any other hypothetical cluster of size $m\geq 2$ which involves points from multiple regions has total cost at least $\frac{m-1}{2\cdot 10^4}$, since any cluster center is within distance $\frac{1}{20}$ of at most $3$ points in its cluster. We see that if the cluster sizes are $m_1,\dots,m_k$ then the total cost is at least $\sum_i \frac{m_i-1}{2\cdot 10^8}=\frac{h-k}{2\cdot 10^4}$ and equality is attained only on the clusterings described above.\\

Now as a result of random fluctuations it is easy to see that paying $\Omega(\sqrt{T/k})$ regret per region is unavoidable which results in expected regret $\sqrt{kT}$ in total. If $k$ is odd we can introduce another region with a single point and easily reduce to the even case.\\

In the case that $k\geq 2^d$, we simiarly create regions with radius and separation distance at least $\frac{1}{10k^{\frac{1}{d-1}}}$. The analysis is essentially unchanged; we lose a factor $k^{-2/(d-1)}$ because the scale of the losses must shrink due to the lack of room.




\end{proof}

\subsection{Dimension-free top-$k$ action OLO}
Our last result is a positive one for online linear optimization. Recall that in the case $k=1$, online convex optimization is equivalent to online linear optimization (indeed there randomness does not help since one can always do better by averaging, and for a deterministic strategy the adversary is always better off by playing a linear function). This does not seem to be true with multiple actions; top-$k$ action OLO is a nontrivial but inequivalent special case of top-$k$ action OCO. 

In this section we prove that $\tilde{O}(\sqrt{k T})$ is achievable for top-$k$ action OLO, thus closing the gap with the lower bound given in Theorem \ref{thm:manyservers}. Whether such a bound holds true for OCO is left as an open problem, but we note in Section \ref{sec:obstruction} that the approach proposed here cannot work for general convex functions.

Our approach is easier to explain for {\em stochastic} OLO, that is when the linear functions $f_1,\hdots, f_T$ are i.i.d. from some unknown probability distribution over $B$. We focus on this case in Section \ref{sec:ERM}, and in Section \ref{sec:extension} we discuss how to extend it to the adversarial case.

\subsubsection{Empirical risk minimization} \label{sec:ERM}
We propose to study the follow the leader algorithm (also known as empirical risk minimization), i.e.,
\[
\{x_{1,t+1}, \hdots, x_{k,t+1}\} \in \argmin_{x^1, \hdots, x^k \in B} \frac{1}{t} \sum_{s=1}^t \min_{1 \leq j \leq k} f_s \cdot x^j \,.
\]
It is clear (and standard) that the instantaneous loss at round $t+1$ of this algorithm is upper bounded by twice the generalization error, namely:
\[
\sup_{x^1, \hdots, x^k \in B} \left( \frac{1}{t} \sum_{s=1}^t \min_{1 \leq j \leq k} f_s \cdot x^j - \E_{f} \left[\min_{1 \leq j \leq k} f \cdot x^j \right] \right) \,.
\]
Via the symmetrization trick, in expectation the above quantity is upper bounded by twice the Rademacher complexity, namely:
\begin{equation} \label{eq:rademacher}
\frac{1}{t} \E_{\epsilon_1, \hdots, \epsilon_t} \sup_{x^1, \hdots, x^k \in B} \sum_{s=1}^t \left( \epsilon_s \min_{1 \leq j \leq k} f_s \cdot x^j \right) \,,
\end{equation}
where $\epsilon_s$ are i.i.d. random signs. This quantity is {\em precisely} the one studied in \cite{Kon18} (see also \cite{foster2019}), where it is shown that it is of order $O\left( \sqrt{\frac{k \log(k) \log^3(t)}{t}} \right)$. Thus the above discussion proves the following result:

\begin{thm} \label{thm:stochasticOLO}
For top-$k$ action stochastic OLO, follow-the-leader has regret at most $O\left( \sqrt{kT \log(k) \log^3(T)}\right).$
\end{thm}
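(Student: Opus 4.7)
The plan is to assemble the three facts already recorded in the paragraph preceding the theorem and then sum over time. Write $L(y^1,\dots,y^k):=\E_f[\min_j f\cdot y^j]$ for the population risk, $\hat L_t$ for the empirical risk on $f_1,\dots,f_t$, and $L^\star:=\min_{y\in B^k} L(y)$, attained at some $x^\star\in B^k$. Because follow-the-leader's play $x_{t+1}:=(x^1_{t+1},\dots,x^k_{t+1})$ is a deterministic function of $f_1,\dots,f_t$ and $f_{t+1}$ is i.i.d.\ and independent of this history, the expected instantaneous cost at round $t+1$ equals $\E[L(x_{t+1})]$. The usual ERM excess-risk decomposition
\[
L(x_{t+1})-L^\star=(L-\hat L_t)(x_{t+1})+\bigl(\hat L_t(x_{t+1})-\hat L_t(x^\star)\bigr)+(\hat L_t-L)(x^\star),
\]
whose middle term is nonpositive by ERM optimality of $x_{t+1}$ and whose last term has mean zero, bounds the expected excess risk by $\E\sup_{y\in B^k}(L-\hat L_t)(y)$. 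Symmetrization bounds this supremum by twice the Rademacher complexity of~\eqref{eq:rademacher}, and invoking the estimate from~\cite{Kon18} yields $\E[L(x_{t+1})]-L^\star=O\!\left(\sqrt{k\log(k)\log^3(t)/t}\right)$.

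Summing this instantaneous bound for $t=1,\dots,T$ and using $\sum_{t=1}^T t^{-1/2}=O(\sqrt T)$ gives the pseudo-regret bound
\[
\sum_{t=1}^T \E[L(x_t)]-T L^\star \,=\, O\!\left(\sqrt{kT\log(k)\log^3(T)}\right).
\]
To convert this to the true regret (against the best fixed $k$-tuple in hindsight) I would use the companion inequality
\[
\E\Big[\min_{y\in B^k}\sum_{t=1}^T\min_j f_t\cdot y^j\Big]\,=\,T\,\E\bigl[\min_{y\in B^k}\hat L_T(y)\bigr]\,\ge\, T L^\star - T\cdot\E\sup_{y\in B^k}(L-\hat L_T)(y),
\]
which follows from $\hat L_T(y)\ge L(y)-\sup_{y'}(L-\hat L_T)(y')$ and then minimizing over $y$. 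The discrepancy between the true regret and the pseudo-regret is therefore at most $T$ times the expected uniform deviation at sample size $T$, i.e.\ the Rademacher bound of~\cite{Kon18} evaluated at $T$, contributing a further $O(\sqrt{kT\log(k)\log^3(T)})$ term.

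The only non-routine input is the Rademacher-complexity estimate of~\cite{Kon18} for the function class $\{y\mapsto \min_j f\cdot y^j:y\in B^k\}$; everything else is standard online-to-batch and ERM generalization bookkeeping. I do not expect a serious obstacle: the main point to double-check is that the class studied in~\cite{Kon18} matches ours (a pointwise minimum of $k$ linear functions indexed by $k$-tuples in the unit ball), which is immediate from the definitions, and that the Lipschitz scaling hidden in their $O(\cdot)$ is compatible with $1$-Lipschitz linear losses on $B$.
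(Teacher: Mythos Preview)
Your proposal is correct and follows essentially the same route as the paper: bound the instantaneous excess risk of follow-the-leader by the uniform deviation (generalization error), apply symmetrization to pass to the Rademacher complexity~\eqref{eq:rademacher}, invoke the estimate from~\cite{Kon18}, and sum over $t$. Your treatment is in fact more careful than the paper's, which compresses the ERM decomposition into one sentence and does not explicitly separate pseudo-regret from true regret; your added step handling that gap via a final uniform-deviation term at sample size $T$ is correct and costs only another $O(\sqrt{kT\log(k)\log^3(T)})$.
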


\subsubsection{Extension to the adversarial case} \label{sec:extension}
A long line of works, starting with \cite{rakhlin2010}, has shown how to extend the empirical risk minimization and Rademacher complexity arguments to the adversarial case. The crux of the matter is to control a variant of \eqref{eq:rademacher} known as the {\em sequential Rademacher complexity}. For example it is known that the classical Talgrand's contraction lemma also applies for sequential Rademacher complexity. In our case we need a vector-valued contraction lemma, as originally proved in \cite{maurer} and then refined in \cite{Kon18, foster2019}. It was communicated to us by Alexander Rakhlin \cite{Rak20} that in fact the calculations done in \cite{foster2019} can be repeated to obtain the same statement for sequential Rademacher complexity, which in particular allows to extend Theorem \ref{thm:stochasticOLO} to top-$k$ action adversarial OLO.

\subsubsection{A different approach is needed for convex functions} \label{sec:obstruction}
To generalize Theorem \ref{thm:stochasticOLO} to OCO one would need to control \eqref{eq:rademacher} for convex functions rather than linear functions. For $k=1$ we would be interested in the quantity
\[
\E_{\epsilon_1, \hdots, \epsilon_t} \sup_{x \in B} \sum_{s=1}^t \epsilon_s f_s(x) \,,
\]
where $f_1, \hdots f_t$ are $1$-Lipschitz convex functions. Unfortunately it is easy to see that in the worst case this quantity is $\tilde\Theta(\sqrt{dT})$. Indeed one can create $2^{\Omega(d)}$ disjoint regions on the boundary of the ball $B$, such that a Lipschitz function can take values $\{0,1\}$ on those regions in any combination possible. One can now simply draw $f_1, \hdots, f_t$ at random from this set of $2^{\Omega(d)}$ functions. As a result we do not know whether to expect a dimension dependence in the convex case and consider it a tantalizing open problem.

\bibliographystyle{plain}
\bibliography{biblio}

\appendix
\section{Proof of Theorem \ref{thm: one-dim competitive}} \label{app:A}

We adopt the potential function argument of~\cite{CKPV91} to this more general problem.
For two $k$-server configurations $X,Y$, let $\match(X,Y)$ denote the cost of the minimum 
cost matching between the points in $X$ and the points in $Y$. This is simply the following 
measure: sort both sets from left to right, $x_1,x_2,\dots,x_k$ and $y_1,y_2,\dots,y_k$, then 
set $\match(X,Y) = \sum_{i=1}^k \|x_i - y_i\|$. In particular, moving from configuration $X$
to configuration $Y$ costs $\match(X,Y)$. Also, for a $k$-server configuration $X$
and a request $f$, denote by $\serve(X,f)$ the cost of serving $f$ at the configuration $X$.
We define a potential function
$$
\Phi(X,Y) = 2k\cdot\match(X,Y) + 2\sum_{x,x'\in X} \|x - x'\|,
$$
where $X$ is a configuration of the algorithm and $Y$ is a configuration of the adversary
(an optimal offline strategy).

Consider at first a request $f$ that causes the algorithm to move just one server. Suppose
that in response, the adversary moves from $Y$ to $Y'$. This affects only the first term of
$\Phi$, and $\Phi$ increases by at most $2k\cdot \match(Y,Y')$. Now, consider the algorithm's 
move. We split the move into two parts: a move towards an adversary's server, and a move
past the last adversary's server in the direction of the move. Of course, either one of these
parts could be vacuous. Let $X,X',X''$ denote, respectively, the starting configuration, the
configuration at the end of the first part, and the configuration at the end of the move, respectively.
Thus, during the first part, the first term of $\Phi$ decreases by $2k\cdot \match(X,X')$ and
the second term of $\Phi$ increases by $2(k-1)\cdot \match(X,X')$. Thus, $\Phi$ decreases
by $2\cdot\match(X,X')$. If $X'' = X'$ (no second part), then either $\serve(X',f) = \serve(Y',f)$,
or $\serve(X',f) = \match(X,X')$. Thus, either way,
$$
\Phi(X',Y') - \Phi(X,Y)\le 2k\cdot (\match(Y,Y') + \serve(Y',f)) - (\match(X,X') + \serve(X',f)).
$$
If, on the other hand, $X''\ne X'$, then clearly $\match(X,X'')\le\serve(X'',f)\le\serve(Y',f)$.
The potential function $\Phi$ increases by $(4k-2)\cdot\match(X',X'')\le (4k-2)\cdot\match(X,X'')$.
So, 
\begin{eqnarray*}
\Phi(X'',Y') - \Phi(X,Y) & \le & 2k\cdot\match(Y,Y') + (4k-2)\cdot\match(X,X'') \\
& \le & 2k\cdot\match(Y,Y') + (4k-2)\cdot\serve(Y',f) \\
& \le & 4k\cdot(\match(Y,Y') + \serve(Y',f)) - 2\cdot\serve(Y',f) \\
& \le & 4k\cdot(\match(Y,Y') + \serve(Y',f)) - (\match(X,X'') + \serve(X'',f)).
\end{eqnarray*}

Now consider a request $f$ that causes the algorithm to move two servers. The effect of the
adversary's move is the same---the potential function $\Phi$ increases by at most
$2k\cdot \match(Y,Y')$. When the algorithm moves, we distinguish between the case that
there is an adversary server between the two moving servers of the algorithm, and the case
that no adversary server occupies this segment. Clearly, as the algorithm's servers move,
we may switch from the first case to the second case. So let $X,X',X''$, respectively, denote 
the configuration before the move, at the end of the first part of the move (when the first case
holds), and at the end of the entire move, respectively. In the first part of the move, it must be
the case that at least one of the two servers is moving towards its matched adversary server,
so $\match(X',Y')\le\match(X,Y')$. On the other hand, the distance between the two servers
shrinks by the movement cost, and for any other server, the distance to one moving server 
shrinks by the same amount as the distance to the other moving server grows.  Therefore,
the second term of $\Phi$ decreases by $2\cdot\match(X,X')$. If $X'' = X'$, we use again the
fact that either $\serve(X',f) = \serve(Y',f)$, or $\serve(X',f) = \match(X,X')$.
We get, as before,
$$
\Phi(X',Y') - \Phi(X,Y)\le 2k\cdot (\match(Y,Y') + \serve(Y',f)) - (\match(X,X') + \serve(X',f)).
$$
If $X''\ne X'$, then during the move from $X'$ to $X''$ both servers might be moving away 
from their matched adversary servers, so $\match(X'',Y') - \match(X',Y')\le \match(X',X'')\le \match(X,X'')$. 
The second term of $\Phi$ still decreases by $2\cdot\match(X',X'')$.
Thus, in total, $\Phi$ increases due to the algorithm's move by at most
$(2k-2)\cdot\match(X,X'')$. We use, as before, the fact that in this case $\match(X,X'')\le\serve(X'',f)\le\serve(Y',f)$.
We get
\begin{eqnarray*}
\Phi(X'',Y') - \Phi(X,Y) & \le & 2k\cdot\match(Y,Y') + (2k-2)\cdot\match(X,X'') \\
& \le & 2k\cdot\match(Y,Y') + (2k-2)\cdot\serve(Y',f) \\
& \le & 2k\cdot(\match(Y,Y') + \serve(Y',f)) - 2\cdot\serve(Y',f) \\
& \le & 2k\cdot(\match(Y,Y') + \serve(Y',f)) - (\match(X,X'') + \serve(X'',f)).
\end{eqnarray*}
As $\Phi(X,Y)\ge 0$ for all $X,Y$, by summing the change in $\Phi$ over all steps we conclude
that the algorithm's total cost is at $4k$ times the optimal cost, plus an additive term
equal to the initial value of $\Phi$ (which depends only on the initial configuration of the servers).

\section{Multi-armed Bandit with $k$ actions}

We consider the \emph{best of $k$ actions} online learning problem, where the player picks $k$ action and receives the minimum loss among them. We consider the full-feedback model, so all losses are observed. As usual the player aims to achieve small regret with respect to the best static $k$-subset. This is a simpler version of the top-$k$ online convex optimization we consider in the main body. Here we point out that the discretized upper bound in Proposition~\ref{prop:naive} cannot be improved without using the geometric structure of $\mathbb R^d$.

\begin{prop}

The minimax regret for best of $k$ actions online learning is $\Theta\left(\sqrt{k T \log\left(\frac{n}{k}\right)}\right).$

\end{prop}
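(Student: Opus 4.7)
The claim has matching upper and lower bounds, and I would prove them separately.

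For the upper bound, the plan is to reduce to standard prediction with expert advice by regarding each $k$-subset $S \subseteq [n]$ as a single meta-expert with loss $\ell_S(t) := \min_{i \in S} \ell_i(t) \in [0,1]$. These meta-expert losses are computable from the full feedback, so running the exponential weights (Hedge) algorithm on the $N := \binom{n}{k}$ meta-experts gives expected regret $O(\sqrt{T \log N})$ against the best meta-expert, which by definition is the best static $k$-subset. Combining this with $\log\binom{n}{k} \leq k \log(en/k)$ yields the claimed $O(\sqrt{k T \log(n/k)})$ upper bound.

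For the lower bound I would apply Yao's principle against a randomized hard instance built out of $k$ independent parallel expert problems of size $n/k$. Partition $[n]$ into $k$ disjoint groups $G_1, \dots, G_k$ of size $n/k$, and in each $G_j$ pre-select a ``good'' action $a_j^{\star}$ uniformly at random and independently across groups. At each round $t$ draw an active group $j(t) \in [k]$ uniformly; the loss of every action outside $G_{j(t)}$ is deterministically $1$, the good action $a_{j(t)}^{\star}$ has $\mathrm{Bernoulli}(1/2 - \epsilon)$ loss, and the remaining $n/k - 1$ actions in $G_{j(t)}$ have i.i.d.\ $\mathrm{Bernoulli}(1/2)$ losses. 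The optimal static $k$-subset $S^{\star} = \{a_1^{\star}, \dots, a_k^{\star}\}$ then achieves expected loss $1/2 - \epsilon$ per round, since only its representative in $G_{j(t)}$ contributes to the minimum while the other $k-1$ members are pinned at $1$. After arguing that any reasonable algorithm must keep exactly one action in each $G_j$, the problem decomposes into $k$ independent expert subproblems of size $n/k$ with effective horizon $\Theta(T/k)$, and the classical $\Omega(\sqrt{(T/k) \log(n/k)})$ expert lower bound applied per group and summed gives $\Omega(\sqrt{k T \log(n/k)})$ total regret, optimized at $\epsilon = \Theta(\sqrt{k \log(n/k)/T})$.

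The hard part will be rigorously restricting attention to the ``one action per group'' regime, because the $\min$ aggregation naively rewards concentration: the minimum of $m$ i.i.d.\ $\mathrm{Bernoulli}(1/2)$ losses has expectation $2^{-m}$, dramatically smaller than the single-arm value $1/2$. The construction's design---pinning every inactive action at loss exactly $1$---is chosen precisely to block this cheat: placing $m_j \geq 2$ actions in some $G_j$ forces $m_{j'} = 0$ in another group, which then costs an extra $\Omega(1)$ loss in every round when $G_{j'}$ is active, strictly dominating any $2^{-m_j}$ gain earned in the over-concentrated group. A short case analysis over the allocation vector $(m_1,\dots,m_k)$ reduces attention to balanced strategies $m_j \equiv 1$, after which a standard KL-divergence (equivalently Le~Cam) argument per group transfers the classical $N$-expert lower bound to the min-aggregated setting and closes the gap with the upper bound.
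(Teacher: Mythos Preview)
Your upper bound matches the paper's exactly: view each $k$-subset as a meta-expert and run Hedge over the $\binom{n}{k}$ experts, using $\log\binom{n}{k}\le k\log(en/k)$.

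Your lower bound is correct but takes a genuinely different route from the paper's. The paper uses a fully \emph{symmetric} stochastic instance: every arm is independently $\mathrm{Bernoulli}(1-1/k)$ at every round. Because the arms are exchangeable, every algorithm has the same expected loss $(1-1/k)^kT$; no reduction to ``one arm per group'' is needed. The work goes entirely into lower-bounding the best hindsight $k$-subset: partition the arms into $k$ batches of size $n/k$, and greedily pick the $j$th arm as the one in batch $j$ that is best on the rounds where the first $j-1$ picks all showed loss $1$. Each such pick is a Gaussian-max over $n/k$ arms on $\Theta(T)$ effective rounds with variance $\Theta(1/k)$, yielding $\Theta(\sqrt{(T/k)\log(n/k)})$ gain per batch and $\Theta(\sqrt{kT\log(n/k)})$ in total.

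Your construction instead \emph{plants} a good arm in each of $k$ groups and forces the player, via the deterministic loss of $1$ on inactive groups, to allocate exactly one action per group; the problem then factors into $k$ independent $(n/k)$-expert instances of horizon $\Theta(T/k)$, and you invoke the classical expert lower bound per group. This is more modular (you import the expert lower bound as a black box rather than redoing a Gaussian-max computation), but you pay for it with the extra ``balanced allocation'' step. That step is sound---convexity of $m\mapsto 2^{-m}$ gives a per-round penalty of at least $1/(4k)$ for any unbalanced allocation, which dominates the $O(\epsilon)$ planted bias once $\epsilon=\Theta(\sqrt{k\log(n/k)/T})\ll 1/k$; and since feedback is full, unbalanced play yields no information gain, so it can be discarded without loss of generality. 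Just make sure to state explicitly that full feedback decouples information from action, and that for the regret lower bound it suffices to compare to the specific subset $S^\star=\{a_1^\star,\dots,a_k^\star\}$ rather than the empirical best $k$-subset.
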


\begin{proof}

The upper bound follows from viewing the problem as full-feedback online learning on all $\binom{n}{k}$ sets of $k$ actions. For the lower bound, we make each action have loss $1$ with probability $1-\frac{1}{k}$ and else $0$, independently over arms and time. Then any algorithm achieves loss $(1-\frac{1}{k})^k T$. However in hind-sight we can group the arms into batches of $\frac{n}{k}$. Then we consider the $k$-subset with $j$th action the arm in batch $j$ with best performance on the subset of times when the first $j-1$ actions all saw a loss. This gains $\sqrt{\frac{Tk\log(n/k)}{n}}$ per time-step via consideration of the maximum of $n/k$ standard Gaussians (which is valid for large $T$ by the central limit theorem), completing the proof.

\end{proof}

\end{document}